\documentclass[preprint,12pt,compress]{elsarticle}
\usepackage{lineno}
\usepackage{amsmath,amsfonts}
\usepackage[ruled,linesnumbered,lined]{algorithm2e}
\usepackage{array}
\usepackage{textcomp}
\usepackage{stfloats}
\usepackage{url}
\usepackage{verbatim}
\usepackage{graphicx}
\usepackage{makecell}
\usepackage{epstopdf}
\usepackage{dcolumn}
\usepackage{bm}
\usepackage{natbib}
\setcitestyle{numbers,square}
\usepackage{braket}
\usepackage{color}
\usepackage{balance}
\usepackage{booktabs}
\usepackage{amssymb}
\usepackage{amsthm}
\usepackage{caption}
\usepackage{tikz}
\usepackage{adjustbox}
\usepackage{titlesec}
\usepackage{quantikz}
\usepackage{hyperref}

\theoremstyle{plain}
\newtheorem{theorem}{Theorem}

\newtheorem{lemma}{Lemma}
\newtheorem{proposition}{Proposition}

\theoremstyle{definition}
\newtheorem{definition}{Definition}

\newtheorem{remark}{Remark}

\begin{document}

\begin{frontmatter}



\title{Distributed quantum-classical hybrid algorithm for solving $K$-SAT problem} 


\author[a,b]{Huaijing Huang}
\author[a,b,d]{Daowen Qiu\corref{mycorrespondingauthor}}
\ead{issqdw@mail.sysu.edu.cn}
\cortext[mycorrespondingauthor]{Corresponding author}
\author[c,d]{Le Luo}
\author[e]{Paulo Mateus}
\affiliation[a]{organization={School of Computer Science and Engineering},
            addressline={Sun Yat-sen University}, 
            city={Guangzhou},
            postcode={510006}, 
            country={China}}
\affiliation[b]{organization={The Guangdong Key Laboratory of Information Security Technology},
            addressline={Sun Yat-sen University}, 
            city={Guangzhou},
            postcode={510006}, 
            country={China}}
\affiliation[c]{organization={School of Physics and Astronomy},
            addressline={Sun Yat-sen University}, 
            city={Zhuhai},
            postcode={519082}, 
            country={China}}
\affiliation[d]{organization={Shenzhen Research Institute of Sun Yat-Sen University},
            addressline={Sun Yat-sen University}, 
            city={Shenzhen},
            postcode={518057}, 
            country={China}}
 \affiliation[e]{organization={Departamento de Matemática, Instituto Superior Técnico},
            addressline={Instituto de Telecomunicações}, 
            city={Lisbon},
            postcode={1049-001}, 
            country={Portugal}}
\begin{abstract}
Recently, Dunjko et al.(PRL, 2018) proposed an algorithm for accelerating the solution of 3-satisfiability problems using a small-scale quantum computer. In this paper, we design a distributed quantum-classical hybrid algorithm for solving $K$-satisfiability problems.    Under resource-constrained conditions, our algorithm achieves a significant acceleration in the core term of the exponential time complexity.  The proposed algorithm is a generalization of the algorithm by Dunjko et al. Compared with their algorithm, our algorithm requires a smaller number of qubits. More importantly, the proposed algorithm does not rely on any quantum communication.
\end{abstract}

\begin{keyword}
Distributed  algorithms, SAT problem, PBS algorithm, Noisy intermediate-scale quantum (NISQ) era, Fixed-point quantum search algorithm
\end{keyword}

\end{frontmatter}



\section{Introduction}
The speedup potential of quantum computing stems from its superposition. However, quantum computers in the NISQ era\cite{preskill2018quantum} often struggle to realize this speedup due to a limited number of qubits and circuits with excessive depth. Distributed quantum computing is a promising approach to address these challenges. Currently, a series of important distributed quantum algorithms have been proposed, including distributed Grover's search algorithms\cite{Qiu2022}, distributed Simon's algorithm\cite{TAN_quantum_2022}, and distributed Shor's algorithm\cite{Xiao2023}, etc. Among these algorithms, some require quantum communication, while others do not. Additionally, regarding the errors arising from distributed computing, Qiu et al.  proposed an error correction scheme that can effectively reduce errors in Ref.\cite{qiu2025universal}.

Boolean Satisfiability (SAT) is an important problem in classical computation and mathematics. Many practical problems can be reduced to SAT problems for solving, such as scheduling problems, planning problems, graph coloring, and other combinatorial optimization problems. The $K$-SAT problem refers to SAT instances in which each clause has at most $K$ literals. The 3-SAT problem is the most representative class of SAT problems, and every SAT problem can be reduced to a 3-SAT problem in polynomial time. Since the 3-SAT problem is NP-complete\cite{cook2023complexity}, many solving algorithms require exponential time in the worst case.

Utilizing quantum computing  to solve SAT problems is a highly active area of research. Currently, many quantum-classical hybrid algorithms for solving SAT problems have been proposed. Ambainis\cite{ambainis2004quantum}, based on the Schöning’s algorithm\cite{schoning1999probabilistic}, proposed an algorithm that uses amplitude amplification to speed up solving SAT problems. Zhang et al.\cite{zhang2020procedure} employ the DPLL algorithm and Grover’s algorithm to solve 3-SAT problems. Eshaghian et al.\cite{eshaghian2025runtime}, taking into account the relationship between coherence time and algorithm runtime, proposed a hybrid algorithm that combines the Schöning’s algorithm and Grover’s algorithm to solve $K$-SAT problems. Varmantchaonala et al. \cite{varmantchaonala2023quantum}, by adjusting parameters, proposed a Grover-based method to reduce the search space of SAT problems. Dunjko et al. \cite{dunjko2018computational} proposed a conceptual quantum-classical hybrid algorithm for solving 3-SAT problems under limited qubit conditions.

Currently, most quantum-classical hybrid algorithms for solving SAT problems use Grover's algorithm\cite{Grover1996}, with a small number employing amplitude amplification\cite{brassard2000quantum}. Since in SAT problems we do not know whether a satisfying assignment exists, nor the number of satisfying assignments, applying Grover's algorithm or amplitude amplification can lead to the soufflé problem\cite{brassard1997searching}. The fixed-point quantum search algorithm\cite{yoder2014fixed} effectively avoids the “soufflé” problem. Current quantum computers face inherent limitations in qubit count, coherence time, and connectivity, making it difficult to implement large-scale quantum algorithms on a single processor. Distributed architectures offer a natural pathway to overcome these constraints by interconnecting multiple small-scale quantum nodes, enabling scalable quantum computation that goes beyond the capacity of individual devices. Integrating problem structures with quantum algorithms—rather than adopting a simple hybrid approach—provides a viable means to tackle larger-scale problems using small-scale quantum devices. How to achieve speedup under such resource-constrained conditions is thus a key question on the path toward practical quantum computing. Based on this, we design a distributed quantum-classical hybrid algorithm to solve the $K$-SAT problem.

We first formalize the framework in \cite{dunjko2018computational} and distill it into a structured algorithm. We then generalize this algorithm. In circuit implementation, if a variable appears frequently in the Boolean formula, its computation generally requires the introduction of additional control gates. To reduce control-gate depth and mitigate the "soufflé" problem that arises during amplification, we first give a measure decomposition  the Boolean formula,  and then incorporate a heuristic strategy for designing a distributed quantum–classical hybrid algorithm to solve the $K$-SAT problem. The implementation of this strategy may significantly reduce computational resource consumption.

The paper is structured in the following manner. In Section \ref{a}, quantum-classical hybrid algorithms for solving 3-SAT problem and the fixed-point quantum search algorithm will be reviewed briefly. In Section \ref{3}, considering circuit depth and qubit count constraints, we design a distributed quantum-classical hybrid algorithm to solve $K$-SAT problem. We prove the correctness of the algorithm and compare it with existing quantum algorithms.   Conclusions and  outlook are presented in Section \ref{4}.

 \section{Preliminaries}\label{a}
In this section, we first review the  quantum-classical hybrid algorithm proposed by Dunjko et al.\cite{dunjko2018computational} for solving the $3$-SAT problem,  and then present the fixed-point quantum search algorithm of Yoder et al.\cite{yoder2014fixed}.

\subsection{ Quantum-classical hybrid algorithm for 3-SAT problem}
Dunjko et al. utilized small-scale quantum devices to accelerate the solution of the 3-SAT problem. Before introducing their algorithm, we first provide a formal definition of the SAT problem.
\begin {definition}\label{d1}
Let $F=\land_{j =1}^m C_j$ be a Boolean formula consisting of $m$ clauses $C_j=\lor_{i =1}^n l_i^{(j)}$ and $n$ literals, where each literal $l_i$ is either a Boolean variable $x_i$ or its negation $\neg{x_i}$. Each clause must contain at most $K$ literals,  referred to as $K$-SAT.  The $K$-SAT problem is to determine whether there exists an assignment  to the $n$ variables that satisfies $F$.
\end{definition}
$F$ can be viewed as a Boolean function on $\{0,1\}^n\rightarrow\{0,1\}$; find an assignment $\mathbf{x}=(x_1,x_2,\cdots,x_n)\in\{0,1\}^n$ such that $F(\mathbf{x}) = 1$. Next, we will reduce the $K$-SAT problem to the PBS (Promise-Ball-SAT) problem. First, we introduce the concept of the Hamming ball.
\begin {definition}
Given a bit string $\mathbf{x}\in\{0,1\}^n$, a Hamming ball centered at $\mathbf{x}$ with radius $r\in \mathbb{N}$ is defined as $B_r(\mathbf{x}) = \{y  \in\{0,1\}^n| d_H(x,y) \leq r\}$, where $d_H$ is the Hamming distance. 
\end{definition}
Covering codes\cite{dantsin2002deterministic} are used in the space partitioning algorithm to divide the entire assignment space into multiple Hamming balls of radius $r$, each represented by a center code word.   Construct a covering code $C = \{\mathbf{x}_1, \mathbf{x}_2, \cdots, \mathbf{x}_t\}\subset\{0,1\}^n$ using coding theory, such that:
$$
\forall x \in \{0,1\}^n, \exists \mathbf{x}_i \in C, d_H(x, \mathbf{x}_i) \le r.
$$
A formal definition is given below.
\begin {definition}
If
$$
  \bigcup_{\mathbf{x} \in {C}} B_r(\mathbf{x})= \{0,1\}^n, r\in \mathbb{N}
$$a set ${C} \subseteq \{0,1\}^n$ is called a code of
covering radius $r$. 
\end{definition}
With Hamming balls, all possible assignments can be covered by a collection of Hamming balls, so the problem is reduced to finding a satisfying assignment within the balls. Next, provide the definition of the PBS problem. 
\begin {definition}
Given a $K$-SAT formula $F$ with $n$ variables, a center assignment $\mathbf{x}\in\{0,1\}^n$, and a radius $r\in \mathbb{N}$. It is promised that there exists at least one satisfying assignment within $B_r(\mathbf{x}) $. The PBS problem is to find an assignment that satisfies $F$.
\end{definition}

 The commitment here does not refer to the problem itself assuming a solution must exist, but rather expresses a form of conditional correctness: as long as there exists a satisfying assignment within the search space, the algorithm can find it. Ref. \cite{dunjko2018computational} primarily considers the 3-SAT problem, that is, the case ($K = 3$) in Definition \ref{d1}.
If the commitment of PBS is violated, the algorithm will output false. First, determine the depth $r$ of the ternary tree based on the radius of the PBS problem, non-recursively implement the PBS algorithm from Ref.\cite{dantsin2002deterministic}, and establish a mapping from $\mathbf{s}$ to $V(\mathbf{s})$. Then, verify whether the new assignment satisfies $F$. To minimize space usage, quantize these two processes, and then integrate them into the amplitude amplification algorithm to form the final quantum algorithm for solving the PBS problem. 
\begin{algorithm}
\caption{Quantum PBS algorithm }
\SetAlgoLined
\label{3QPBS}
\KwIn{ $3$-SAT formula $F$, $r$, assignment $\mathbf{x}_i$.}
\KwOut{ A satisfying assignment of $F$, or \textbf{FALSE}.}
 Prepare initial state 
        $\frac{1}{\sqrt{3^{{r}}}}\sum_{\mathbf{s}\in\{1,2,3\}^{{r}}} |\mathbf{s}\rangle|V(\mathbf{s})\rangle\left|F(\mathbf{x}_{V(\mathbf{s})})\right\rangle$\;
     Apply amplitude amplification to find $V(\mathbf{s})$ 
        such that $F(\mathbf{x}_{V(\mathbf{s})})=1$\;
     Measure and determine $V(\mathbf{s})$ to obtain $\mathbf{x}_{V(\mathbf{s})}$\;   
 Check whether $\mathbf{x}_{V(\mathbf{s})}$ satisfies $F$. If it does, output $\mathbf{x}_{V(\mathbf{s})}$; otherwise, output  \textbf{FALSE}.
\end{algorithm}

Given a center assignment $\mathbf{x} \in \{0,1\}^n$ and a flip sequence 
$\mathbf{s} =\left(s_1, s_2, \cdots, \\s_{r}\right)$.  The flip sequence arises from the procedure used in solving the PBS problem, where at each step a variable is randomly chosen from an unsatisfied clause and flipped, thereby reducing the Hamming ball radius by one. For the 3-SAT problem, each clause contains three literals, corresponding to three possible choices, i.e., $s_i\in\{1,2,3\}$. By making consecutive $s_i$ selections, a mapping from $\mathbf{s}$ to $V(\mathbf{s})$ can be constructed.
Let $V(\mathbf{s}) \subseteq \{1,2,\cdots, n\}$ denote the set of variable indices 
flipped along the path determined by $\mathbf{s}$. 
The notation $x_{V(\mathbf{s})}$ represents the resulting assignment 
obtained from $\mathbf{x}$ after flipping all variables in $V(\mathbf{s})$:
\[
x_{i,V(\mathbf{s})} =
\begin{cases}
\neg x_i, & \text{if } i \in V(\mathbf{s}),\\[4pt]
x_i, & \text{otherwise.}
\end{cases}
\]
In other words, $V(\mathbf{s})$ specifies which variables are flipped, 
while $x_{V(\mathbf{s})}$ denotes the new candidate assignment obtained 
after applying those flips to $\mathbf{x}$.  Next, the new candidate assignment is verified to determine whether it satisfies the formula. For detailed procedures, see Algorithm \ref{3QPBS}.  Algorithm \ref{3QPBS} is referred to as $Q(F, \mathbf{x}_i, {r})$ in the following.

\begin{algorithm}
\caption{QCPBS algorithm }
\SetAlgoLined
\label{CPBS}
\KwIn{ $3$-SAT formula $F$, $r$, assignment $\mathbf{x}_i$, $r_{\max}$.}
\KwOut{ A satisfying assignment of $F$, or \textbf{FALSE}.}
If $\mathbf{x}_i$ satisfies $F$, then output $\mathbf{x}_i$\;
  If $r = 0$, then output \textbf{FALSE}\;
 \If{$r>r_{\max}$} 
 {Select an arbitrary clause 
$C$ of the formula 
$F$ that is unsatisfied by the assignment 
$\mathbf{x}_i$\;
For each literal 
$l$ appearing in the clause 
$C$, construct a new instance by fixing the variable of 
$l$ so that 
$l$ becomes satisfied, resulting in the restricted formula 
$F|_{l=1}$\;
For each such restriction, recursively invoke QCPBS algorithm on $(F|_{l=1},\mathbf{x}_i,r-1, r_{\max})$}
\Else   
{call Algorithm \ref{3QPBS} on $(F, \mathbf{x}_i, {r}_{\max})$}
If any recursively call outputs a $\mathbf{x}^\ast$ such that  $F(\mathbf{x}^\ast) = 1$, the algorithm  outputs the assignment; otherwise, it outputs
 \textbf{FALSE}.
\end{algorithm}
Based on the number of qubits available in existing quantum devices and the qubit resources required by Algorithm \ref{3QPBS}, the scale of PBS problem that the device can solve can be calculated, represented by the radius $r_{\max}$. Let the number of qubits of the available quantum device be $cn$, where $c\in (0,1)$.    According to Theorem 1 in the supplementary material, the number of qubits required to prepare initial state $$\frac{1}{\sqrt{3^{r}}}\sum_{\mathbf{s}\in\{1,2,3\}^{r}} |\mathbf{s}\rangle|V(\mathbf{s})\rangle\left|F(\mathbf{x}_{V(\mathbf{s})})\right\rangle$$ is $O(r\log\!\left(\frac{n}{r}\right) + r + \log n)$.
Suppose
the exact scaling of this number of qubits is $Ar log(n/r) + Br + O(log n)$ for constants $A, B > 0$. Let $\beta(c)$ satisfy  $A \beta(c)\log\!\left(\dfrac{1}{\beta(c)}\right) + B\beta(c)=c,$ where $c\in (0,1)$ is related to the scale $cn$
of the given quantum computer. We assume that the radius called by the quantum device is $r_{\max}\approx \beta(c) n.$ 

\begin{algorithm}
\SetAlgoLined
\caption{ Quantum-classical hybrid  algorithm for PBS problem}
\label{Cball}
\KwIn{ $3$-SAT formula $F$,  assignment $\mathbf{x}_i$,  $r$, code $\mathbf{C}\subset\{1,2, 3\}^{t}$ of radius $\frac{t}{3}$, $r_{\max}$.}
\KwOut{ A satisfying assignment of $F$, or \textbf{FALSE}.}
 If $\mathbf{x}_i$ satisfies $F$, then output $\mathbf{x}_i$\;
  If $r = 0$, then output \textbf{FALSE}\;
  Construct a maximal set 
$G$ of pairwise disjoint unsatisfied $3$-clauses of  $F$ unsatisfied by $\mathbf{x}_i$\;
 \If{$|G| \le t$}
{Consider the variable set 
vbl($G$) appearing in the clauses of 
$G$, and enumerate all assignments 
$\eta\in \{0,1\}^{vbl(G)}$\;
For each 
$\eta$,  call Algorithm \ref{CPBS} on $(F|_\eta, \mathbf{x}_i,  r, r_{\max})$} 
 \Else   
{Choose a subset 
$H
=\{C_1,\dots,C_t\} \subseteq G$
 of $t$ pairwise disjoint unsatisfied clauses\;
 For each codeword 
$w \in \mathbf{C}$, construct a modified assignment 
$\mathbf{x}_i[H,w]$ according to 
$w$\;
For each such modification, recursively call algorithm\ref{Cball} on the instance $(F, \mathbf{x}_i[H,w], r - \tfrac{t}{3}, \mathbf{C})$, which means returning to step 2}
      If any recursive branch finds a $\mathbf{x}^\ast$ such that  $F(\mathbf{x}^\ast) = 1$, the algorithm  outputs this assignment; otherwise, it outputs
 \textbf{FALSE}. 
\end{algorithm}
The main idea  is to recursively call itself using a classical algorithm to handle gradually $r$ when the available number of qubits is limited, until it reaches a range that can be managed by the quantum device. We  now describe several  subalgorithms used as building blocks of the main algorithm. First, based on Ref.\cite{dantsin2002deterministic}, a quantum-classical hybrid algorithm(Algorithm \ref{CPBS} ) for solving PBS problem  is proposed. 
Building on this, and incorporating the scheme from Ref.\cite{moser2011full}, a further improved algorithm(Algorithm \ref{Cball} ) for this problem is presented. 
 In Algorithm \ref{Cball}, $t$ is a function that increases slowly with $r$. In practical instances, $t = \log \log r$. If the last recursion occurs in the case where $G \le t$, the called quantum algorithm solves for a radius of $r_{\max}$; however, if the last recursion occurs in the case where $G > t$, the solution radius is $\bar{r}_{\max}\in(r_{\max}-\frac{t}{3}, r_{\max}]$.
 
Using the preceding subalgorithms, the entire detailed process for solving the 3-SAT problem  is provided in Algorithm \ref{alg:hybridqball}.    The results of Ref.\cite{dunjko2018computational} are represented by the following theorem.

\begin{algorithm}
\SetAlgoLined
\caption{ Quantum-classical hybrid  algorithm for 3-SAT problem}
\label{alg:hybridqball}
\KwIn{ $3$-SAT formula $F$,   covering code $C = \{\mathbf{x}_1, \mathbf{x}_2, \dots, \mathbf{x}_t\}$
$\subset\{0,1\}^{n}$ of radius $r$, code $\mathbf{C}\subset\{1,2, 3\}^{t}$ of radius $\frac{t}{3}$,  $r_{\max}$.}
\KwOut{ A satisfying assignment of $F$, or \textbf{FALSE}.}
 Select an element  $ \mathbf{x}_i\in C$ uniformly at random  and without repetition\;
  \If{$r>r_{\max}$} 
  {Call  Algorithm \ref{Cball} on $(F, \mathbf{x}_i, {r}, r_{\max}, \mathbf{C})$}
  \Else 
 {Call Algorithm \ref{3QPBS} on $(F, \mathbf{x}_i, \bar{r}_{\max})$}
    If any calls finds a $\mathbf{x}^\ast$ such that  $F(\mathbf{x}^\ast) = 1$, the algorithm  outputs this assignment; otherwise, it returns to step 1\; 
When all codewords is exhausted without finding a satisfying assignment, the algorithm outputs \textbf{FALSE}.
\end{algorithm}

\begin{theorem}\label{1}
Given a quantum computer with
$M= cn $, $c\in (0,1)$ is an arbitrary constant, 
  there exists a hybrid quantum-classical algorithm that solves 3-SAT probiem in runtime $ O^*(2^{(0.415 - f(c) + \varepsilon)n})$, where $f(c) \approx0.21\beta(c)>0$ is a constant and $ \varepsilon $ can be made arbitrarily small, $\beta(c) = \Theta(c/\log(1/c))$, using  $O(r\log\!\left(\frac{n}{r}\right) + r + \log n)$ qubits.
\end{theorem}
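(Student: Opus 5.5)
We plan to prove the theorem by a runtime analysis of Algorithm \ref{alg:hybridqball}, splitting the total cost into three multiplicative factors: the number of Hamming balls in the covering code, the size of the classical recursion tree of Algorithm \ref{Cball} above radius $r_{\max}$, and the cost of each call to Algorithm \ref{3QPBS} at the leaves. Correctness follows directly. Every satisfying assignment lies in some ball $B_r(\mathbf{x}_i)$. Within that ball, Algorithms \ref{CPBS} and \ref{Cball} preserve the PBS promise in at least one branch; this is the standard argument of \cite{dantsin2002deterministic,moser2011full}. Amplitude amplification on the leaf instance, followed by the classical check in Algorithm \ref{3QPBS}, then returns a solution with bounded error, and we amplify by polynomial repetition.

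For the first factor, we take the covering radius $r=\rho n$ with $\rho=1/4$. The number of balls is then $2^{(1-h(\rho))n}\,\mathrm{poly}(n)$, where $h$ is the binary entropy. For the second factor, we use the Moser--Scheder analysis \cite{moser2011full}: with $t=\log\log r$ growing slowly, each classical level of Algorithm \ref{Cball} reduces the radius by $t/3$ at branching cost $|\mathbf{C}|\approx 3^t/\binom{t}{t/3}2^{t/3}$, up to a subexponential factor. So handling a radius $r'$ classically costs $2^{(r')\log_2 2 + o(r')}=2^{r'+o(n)}$. Combined with the ball count $2^{(1-h(1/4))n}$ and $r=n/4$, this recovers the purely classical bound $(4/3)^n=2^{0.415n}$. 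The $|G|\le t$ branch contributes only $7^{t}=2^{o(n)}$ per step, because $t$ is sublinear; this is the source of the $\varepsilon$.

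For the third factor, we stop the recursion at radius $\bar r_{\max}\in(r_{\max}-t/3,r_{\max}]$ with $r_{\max}=\beta(c)n$. Algorithm \ref{3QPBS} searches $3^{r_{\max}}$ paths with amplitude amplification in time $O^*(3^{r_{\max}/2})$, instead of the classical $2^{r_{\max}}$-equivalent cost of the replaced subtree. The per-leaf saving in the exponent is $r_{\max}(1-\tfrac12\log_2 3)\approx 0.2075\,\beta(c)n$. This gives the runtime $O^*(2^{(0.415-f(c)+\varepsilon)n})$ with $f(c)\approx0.21\beta(c)$. The qubit count comes from Theorem 1 of the supplement applied at radius $r_{\max}$, namely $O(r\log(n/r)+r+\log n)$ with $r=r_{\max}$. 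Solving $A\beta\log(1/\beta)+B\beta=c$ gives $\beta(c)=\Theta(c/\log(1/c))$ by a routine inversion.

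The main obstacle is the second step: making precise that the classical tree size scales as $2^{r'}$, the effective branching of $\log_2 2$ per unit of radius, rather than $3^{r'}$. This requires the covering-code construction of $\mathbf{C}$ and a careful treatment of the mixed $|G|\le t$ branch, where the formula is restricted. We would first try to import the Moser--Scheder recurrence verbatim and verify that replacing the leaf cost by $3^{\bar r_{\max}/2}$ leaves the recurrence intact. The final exponent is $(1-h(\rho))+\rho-0.2075\beta$, minimized at $\rho=1/4$.
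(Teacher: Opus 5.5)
The paper does not prove this theorem itself; it quotes it from Dunjko et al. Its proof of the $K$-SAT theorem, which it says reduces to this case at $K=3$, $k=0$, uses exactly your decomposition. That decomposition has three factors: $2^{(1-h(\rho))n}$ balls; Moser--Scheder branching of about $2$ per unit of radius down to $\bar r_{\max}$; and $O^*(3^{\bar r_{\max}/2})$ for the amplified leaf search. This gives a saving of $r_{\max}(1-\tfrac12\log_2 3)\approx 0.2075\,\beta(c)n$. That part is fine, as are your qubit count and the inversion giving $\beta(c)=\Theta(c/\log(1/c))$.

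The gap is the choice of covering radius, $\rho=1/4$. Here $h(1/4)=2-\tfrac34\log_2 3$, so $1-h(1/4)+\tfrac14=\tfrac34\log_2\tfrac32\approx 0.439$, not $0.415$. Your argument as written therefore yields only $O^*(2^{(0.439-f(c)+\varepsilon)n})$. Both of your claims, that $\rho=1/4$ ``recovers $(4/3)^n$'' and that the exponent is ``minimized at $\rho=1/4$'', are false. The derivative is $\frac{d}{d\rho}\bigl(1-h(\rho)+\rho\bigr)=1-\log_2\frac{1-\rho}{\rho}$, and $h$ is concave, so the minimizer is $\rho=1/3$. In general it is $\rho=1/K$ for branching factor $K-1$, which is what the paper takes. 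At $\rho=1/3$ you get $1-h(1/3)+\tfrac13=2-\log_2 3=\log_2\tfrac43\approx 0.415$.

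The value $\rho=1/4$ is the optimum for Dantsin et al.'s ternary ball search, i.e.\ of $1-h(\rho)+\rho\log_2 3$, which gives $(3/2)^n$. You have paired that radius with the binary Moser--Scheder branching. The quantum saving term does not depend on $\rho$, provided $r_{\max}=\beta(c)n\le\rho n$. So setting $\rho=1/3$ repairs the proof with no other changes.

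A minor point concerns where the $\varepsilon$ comes from. It comes mainly from the $t^2$ overhead in $|\mathbf{C}|\le t^2\,2^{t/3}$ per radius reduction of $t/3$; the paper writes this as $(K-1+\epsilon)^{\Delta}$. It is not only the one-time $7^t$ factor from the $|G|\le t$ branch.
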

In the Theorem \ref{1}, $O^*$ denotes ignoring polynomial factors, and $ \Theta$ represents asymptotic equality. 
It is worth emphasizing that this hybrid algorithm does not directly invoke quantum subroutines on balls of arbitrary radius. When the search radius $r $ exceeds the maximum radius $r_{\max}$ that the quantum device can accommodate, the algorithm continues to reduce $r$  until it is reduced to a range feasible for quantum resources. 
\begin{remark}
 Algorithm \ref{CPBS}, \ref{Cball}, and \ref{alg:hybridqball} are based on the design framework proposed in Ref.\cite{dunjko2018computational}, and it has been supplemented  by us.
\end{remark}

\subsection{ Fixed-point quantum search algorithm}
The fixed-point quantum search algorithm is essentially a solution proposed for the unordered database search problem. This algorithm not only avoids the "soufflé problem" that commonly occurs in Grover's algorithm during unordered search but also maintains the quadratic speedup advantage of Grover's algorithm\cite{Grover1996}. Below is the formal definition of the problem that the fixed-point quantum search algorithm aims to solve.
\begin {definition}
Given a Boolean function $F: \{0,1\}^n\rightarrow\{0,1\}$ that contains a number of target items satisfying  $F(x) = 1$. Let $\lambda$  represent the fraction of target states within the total, where $\lambda$ is unknown but it is known that  $\lambda_{min}$  satisfies $\lambda\ge\lambda_{min}>0$. The problem is to find one target item $ x $   with fewer  queries and a success probability of at least $1-\epsilon^2$, where $\epsilon\in(0,1)$.
\end{definition}

Given a unitary operator  $\mathcal{A}$ that prepares the initial state $\ket{S}$, i.e., $\mathcal{A}\ket{0}=\ket{S}$.  Let  \begin{equation}
\ket{S}=\ket{T}+\ket{\bar{T}},
\end{equation} 
with  \begin{equation}
\ket{T} = \sum_{x:F(x)=1}\alpha_x|x\rangle,\quad  \ket{\bar{T}}= \sum_{x:F(x)=0}\alpha_x|x\rangle.
\end{equation} 
It is required to find the target state within $\ket{T}$ with a success probability $1-\epsilon^2$.

The operator for the fixed-point quantum search algorithm is defined as\begin{equation}
G(\alpha,\beta) = - \mathcal{A}U_{0}(\alpha)\mathcal{A}^{\dagger} U_{F}(\beta),
\end{equation}
     \begin{align*}
U_{F}(\beta)|x\rangle &=
\begin{cases}
e^{\imath\beta
\cdot F(x)}|x\rangle, & F(x)=1,\\
|x\rangle, & F(x)=0,
\end{cases} 
 \end{align*}
  \begin{align*}
U_{0}(\alpha)|x\rangle&=
\begin{cases}
e^{\imath\alpha
}|x\rangle, & x=0^n,\\
|x\rangle, & x \neq 0^n.
\end{cases} 
 \end{align*}
 In order to ultimately output the target element with probability $1-\epsilon^2$, there are specific requirements for the number of iterations and the rotation angles $\beta$ and $\alpha$.      Set
$$
L \ge \frac{\log(2/\epsilon)}{\sqrt{{\lambda_{min}}}},
$$
 and require
$
L$  to be the nearest odd integer. Let $$
S_L = G(\alpha_l,\beta_l)\cdots G(\alpha_1,\beta_1)
    = \prod_{j=1}^{l} G(\alpha_j,\beta_j), \quad L=2l+1.
$$
All rotation angles satisfy
 \begin{equation}\label{t3}
\alpha_j = -\beta_{l-j+1}
= 2 \cot^{-1}\!\left( 
    \tan\!\left(\frac{2\pi j}{L}\right)
    \sqrt{1 - \gamma^2}
\right), 
\end{equation}
where $j = 1, 2, \cdots, l$, $\gamma^{-1} = T_{1/L}(1/\epsilon)$. $T$ is the Chebyshev polynomial of the first kind\cite{rivlin2020chebyshev}, which is defined as
$$T_n(x) = \cos\!\left(n \arccos x\right), \quad |x| \leq 1.$$
In fact, the iterative operator $G(\alpha,\beta)$ is applied 
$l$ times in total.

After setting all the relevant parameters,  the complete algorithm process is shown in Algorithm \ref{fixed point}.

\begin{algorithm}
\caption{Fixed-point quantum search algorithm}
\label{fixed point}
\KwIn{  operator $\mathcal{A}$, function $F$, oracle $U_{F}$, parameter $\lambda_{min}$, $\epsilon$.}
\KwOut{ $x \in \{0,1\}^{n}$ such that $F(x)=1$.}
 Set
$L \ge \frac{\log(2/\epsilon)}{\sqrt{{\lambda_{min}}}},$ require
$L$  to be the nearest odd integer\;
 $l=\frac{L-1}{2}$\;
 Applying the $ S_L$ operator to the initial state $\ket{S}$, i.e., $ S_L\ket{S}$, where $$
\ket{S}=\mathcal{A}\ket{0}, \quad S_L 
    = \prod_{j=1}^{l} G(\alpha_j,\beta_j)\;$$
    
 Measure  and obtain an $x$ with $F(x)=1$\;
\end{algorithm}

The following are the results from their paper, which we present in the form of a theorem.
\begin{theorem}\label{2}
Let the target item  $x\in\{0,1\}^n$  satisfy  
$F(x) = 1$, and suppose this target set occupies a fraction $\lambda\ge\lambda_{min}>0$ of the entire search space. Given a parameter $\epsilon\in(0,1)$, Algorithm \ref{fixed point} can find a target item with probability at least $1-\epsilon^2$.
\end{theorem}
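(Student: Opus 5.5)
The plan is to follow the analysis of Yoder, Low and Chuang: reduce the dynamics of $S_L$ to an invariant two-dimensional subspace, and then show that the specific phase choices in (\ref{t3}) turn the failure amplitude into a Chebyshev polynomial whose modulus is at most $\epsilon$ for all $\lambda\ge\lambda_{min}$.

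\textbf{Step 1: the invariant plane.} Write $\ket{S}=\sqrt{\lambda}\,\ket{\tau}+\sqrt{1-\lambda}\,\ket{\bar\tau}$, where $\ket{\tau}=\ket{T}/\|\ket{T}\|$ and $\ket{\bar\tau}=\ket{\bar T}/\|\ket{\bar T}\|$. The operator $U_F(\beta)$ acts on $\mathrm{span}\{\ket{\tau},\ket{\bar\tau}\}$ as $\mathrm{diag}(e^{\imath\beta},1)$. The operator $\mathcal{A}U_0(\alpha)\mathcal{A}^\dagger$ equals $I+(e^{\imath\alpha}-1)\ket{S}\bra{S}$. Hence every $G(\alpha_j,\beta_j)$ preserves this plane, and $S_L\ket{S}$ can be computed with $2\times 2$ matrices.

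\textbf{Step 2: the failure amplitude as a polynomial.} I will write $\bra{\bar\tau}S_L\ket{S}$ as a function of the phases and of $\lambda$. With the phases related by $\alpha_j=-\beta_{l-j+1}$, the sequence is symmetric, and the standard computation from composite-pulse theory gives
\[
\bigl|\bra{\bar\tau}S_L\ket{S}\bigr|^2=\gamma^2\,T_L^2\!\left(T_{1/L}(1/\gamma)\sqrt{1-\lambda}\right).
\]
The success probability is then
\[
P_L=1-\gamma^2\,T_L^2\!\left(T_{1/L}(1/\gamma)\sqrt{1-\lambda}\right).
\]

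\textbf{Step 3: bounding the failure probability.} Set $\gamma^{-1}=T_{1/L}(1/\epsilon)$. Then $\gamma=\epsilon$ once the composition is unwound, since $T_{1/L}(1/\gamma)=1/\epsilon$ means the prefactor matches. On $[-1,1]$ the Chebyshev polynomial satisfies $|T_L(x)|\le 1$. So the failure probability is at most $\epsilon^2$ whenever $T_{1/L}(1/\epsilon)\sqrt{1-\lambda}\le 1$.

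It then remains to check that this condition holds for every $\lambda\ge\lambda_{min}$. For $x\ge 1$ we have $T_{1/L}(x)=\cosh(\cosh^{-1}(x)/L)$, and $\cosh^{-1}(1/\epsilon)\le\log(2/\epsilon)$. The choice $L\ge\log(2/\epsilon)/\sqrt{\lambda_{min}}$ then gives
\[
T_{1/L}(1/\epsilon)\le\cosh\!\left(\sqrt{\lambda_{min}}\right).
\]
We also have $\sqrt{1-\lambda}\le\sqrt{1-\lambda_{min}}$. It therefore suffices to verify the elementary inequality $\cosh(\sqrt{\mu})\sqrt{1-\mu}\le 1$ for $\mu\in(0,1]$, which follows by comparing the Taylor expansions of $\log\cosh$ and $\log(1-\mu)$. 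Measuring the resulting state then yields a target with probability at least $1-\epsilon^2$.

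\textbf{Main obstacle.} The hard step is Step 2: proving the Chebyshev closed form for the product of $l$ non-commuting $2\times 2$ matrices. I would first try induction on $l$, exploiting the identity $\cot(\alpha_j/2)=\tan(2\pi j/L)\sqrt{1-\gamma^2}$. This identity makes the roots of the failure polynomial in the variable $\sqrt{1-\lambda}$ equal to the Chebyshev nodes scaled by $1/T_{1/L}(1/\gamma)$. Matching the degree $L$ and the leading coefficient then identifies the polynomial as $\gamma T_L(\cdot)$. Alternatively, this identity can be cited directly from Yoder et al.
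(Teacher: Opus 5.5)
Your route is the right one: restrict to the plane of $|\tau\rangle,|\bar\tau\rangle$, import the Chebyshev closed form of Yoder et al., then show that $\lambda\ge\lambda_{min}$ puts its argument in $[0,1]$. The paper gives no proof of this theorem and simply cites Yoder et al., so this is the argument it relies on. Your final estimate is correct, and sharper than the asymptotic remark in Yoder et al.: $\cosh^{-1}(1/\epsilon)\le\log(2/\epsilon)$, together with $\cosh y\le e^{y^2/2}$ and $1-\mu\le e^{-\mu}$, gives $T_{1/L}(1/\epsilon)\sqrt{1-\lambda}\le 1$ exactly.

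\textbf{Step 2 is false as you set it up.} In Step 1 you give both oracles the same sign convention, $\mathcal{A}U_0(\alpha)\mathcal{A}^\dagger=I+(e^{\imath\alpha}-1)|S\rangle\langle S|$ and $U_F(\beta)=\mathrm{diag}(e^{\imath\beta},1)$, and then impose $\alpha_j=-\beta_{l-j+1}$. Already for $L=3$ this fails. Write $b=\sqrt{1-\lambda}$ and $\alpha=\alpha_1$; a direct computation gives
\[
\langle\bar\tau|G(\alpha,-\alpha)|S\rangle=-b\bigl[\,2-\cos\alpha+(2\cos\alpha-2)b^2+\imath\sin\alpha\,\bigr],
\]
whose modulus is at least $b|\sin\alpha|$. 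Here $\cot(\alpha/2)=-\sqrt{3(1-\gamma^2)}$, and $\gamma<1$ for every $\epsilon\in(0,1)$, so $\sin\alpha\neq0$. Hence the failure amplitude never vanishes at $b=\tfrac{\sqrt3}{2}\gamma$, although $\epsilon T_3(b/\gamma)$ does. Concretely, for $\epsilon=\lambda=\lambda_{min}=\tfrac12$ the prescription gives $L=3$ and a failure probability of about $0.95$.

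The closed form is Yoder et al.'s result for their convention, in which the two phase oracles carry opposite signs: their reflection about $|S\rangle$ is $I-(1-e^{-\imath\alpha})|S\rangle\langle S|$, while the target phase is $e^{\imath\beta}$. In your convention this corresponds to $\alpha_j=+\beta_{l-j+1}$. So neither the citation nor your induction can deliver Step 2 until you fix this relative sign. The defect is inherited from the paper's definition of $U_0(\alpha)$, but a proof has to repair it rather than reproduce it.

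\textbf{Steps 2--3 also conflate $\gamma$ and $\epsilon$.} The correct formula is
\[
P_{\rm fail}=\epsilon^2T_L^2\bigl(T_{1/L}(1/\epsilon)\sqrt{1-\lambda}\bigr)=\epsilon^2T_L^2\bigl(\sqrt{1-\lambda}/\gamma\bigr).
\]
From $\gamma^{-1}=T_{1/L}(1/\epsilon)$ one gets $T_L(1/\gamma)=1/\epsilon$, not $\gamma=\epsilon$ and not $T_{1/L}(1/\gamma)=1/\epsilon$. In fact $\gamma\to1$ as $L$ grows, so your formula read literally would only bound the failure probability by $\gamma^2\approx1$.

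Correspondingly, in your induction the roots are the Chebyshev nodes scaled by $\gamma$, not by $1/T_{1/L}(1/\gamma)$. The normalization is most easily fixed by $|\langle\bar\tau|S_L|S\rangle|=1$ at $\lambda=0$, i.e.\ $\epsilon T_L(1/\gamma)=1$, rather than by computing a leading coefficient.

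With both corrections, your Step 3 reaches the right condition $\sqrt{1-\lambda}\le\gamma$ and the argument goes through.
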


\section{Distributed quantum-classical hybrid algorithm for solving $K$-SAT problem}\label{3}
In this section, we first present some parameters related to the algorithm, then design the distributed quantum-classical hybrid algorithm, subsequently prove its correctness, and finally compare it with existing distributed quantum algorithms.
\subsection{ The design of distributed quantum-classical hybrid algorithms}
Given a ${K}$-SAT formula $F=\land_{j =1}^m C_j$, where $C_j=\lor_{i =1}^n l_i$ and $l_i$ is either a Boolean variable $x_i$ or its negation $\neg{x_i}$.  
We computer the frequency of each literal in the entire formula, represented by the following metric. 
\begin {equation}
M({x_i},F)= \sum_{j=1}^{m} \sum_{l_i\in C_j} \mathbf{I}(l_i =x_i \lor l_i = \neg x_i)
\end{equation}
where $\mathbf{I}$ is the indicator function, which takes the value 1 when the condition is satisfied, and 0 otherwise. During circuit implementation, if variable $x_i$ appears frequently in  $F$, more control bits are inevitably required during the computation. To reduce  the depth  of control gates, we decompose $F$ based on the top $k$ variables with the highest frequency of occurrence, which means selecting a decomposition criterion with a larger $M$. If $k$ is small, we directly enumerate all subfunctions. If $k$ is relatively large, we can construct covering codes and construct corresponding subfunctions for each codeword in the covering code, thereby reducing the number of subfunctions. Covering code construction is classical, and its construction cost depends on the radius and the code length. This paper mainly considers the case where $k$ is not large, and directly enumerates all subfunctions.

For example,  
\begin {equation}F = (x_1 \lor x_2 \lor x_4)\land  (x_2 \lor x_3 \lor x_4)\land  (\neg x_1 \lor x_2 \lor \neg x_4). \end{equation}
We have   \begin {equation}
M({x_1},F)=2, M({x_2},F)=3, M({x_3},F)=1, M({x_4},F)=3.
\end{equation}

 For a given ${K}$-SAT formula $F=\land_{j =1}^m C_j$, we  first 
 calculate the $M$ value for each variable and sort them in descending order. The entire sorting can be completed using the quicksort algorithm in $O(n\log n)$. In the computation process, since $F$ has $n$ variables and $m$ clauses, and each variable is represented by a binary number, the time complexity is $O(m\log n)$. As can be seen, this can be accomplished within polynomial time. Next, select the top $k\ge1$ variables for decomposition. If there are multiple identical $M$ values at the last position, we can randomly choose one of the variables. Without loss of generality, assume that \begin {equation}M({x_1},F) \ge M({x_2},F) \ge \cdots \ge M({x_k},F)\cdots \ge M({x_n},F),\end{equation} and let ${i'} = {x_1}{x_2}\cdots{x_k} \in \{0, 1\}^k$. For  $F(x_1x_2\cdots x_n )$, we have \begin {equation}F({i'}x_{k+1}\cdots x_n) = F_{i'}(x_{k+1}\cdots x_n), {i'} = {x_1}{x_2}\cdots{x_k} \in \{0, 1\}^k\end{equation} which is equivalent to decomposing $F$ into $2^k$ subfunctions $F_{i'}$. 
 \begin{algorithm}
\caption{Quantum KPBS algorithm}
\SetAlgoLined
\label{KQPBS}
\KwIn{ $K$-SAT formula $F$($K\ge3$), $r$, assignment $\mathbf{x}_i$, $\epsilon\in(0,1)$.}
\KwOut{ A satisfying assignment of $F$, or \textbf{FALSE}.}
 Prepare initial state 
        $\frac{1}{\sqrt{K^{{r}}}}\sum_{\mathbf{s}\in\{1,2,\cdots,K\}^{{r}}} |\mathbf{s}\rangle|V(\mathbf{s})\rangle\left|F(\mathbf{x}_{V(\mathbf{s})})\right\rangle$\;
         Apply fixed-point search algorithm to find $V(\mathbf{s})$ 
        satisfying $F(\mathbf{x}_{V(\mathbf{s})})=1$ with a success probability of at least $1-\epsilon^2$\;
     Measure and determine $V(\mathbf{s})$ to obtain $\mathbf{x}_{V(\mathbf{s})}$\;    
 Check whether $\mathbf{x}_{V(\mathbf{s})}$ satisfies $F$. If it does, output $\mathbf{x}_{V(\mathbf{s})}$; otherwise, output  \textbf{FALSE}.
\end{algorithm}

For any  $F_{i'}$, we use the covering code to reduce the search for an assignment in $\{0, 1\}^{n-k}$ to a PBS problem.  
We  present a lemma from Ref.\cite{dantsin2002deterministic}, which reveals the existence of covering codes.
\begin{lemma}\label{t4}
Let $d \ge 2$ be a divisor of $n-k \ge 1$, \[
h(\rho) = - \rho \log \rho \;-\; (1-\rho)\log(1-\rho),
\]
and $0 < \rho < 1/2$.
Then there exists a polynomial $q_d$ such that a covering code of length $n-k$,
radius at most $\rho (n-k)$, and size at most
\[
q_d(n-k) \cdot 2^{(1 - h(\rho))(n-k)}
\]
can be constructed in time
\[
q_d(n-k) \cdot \left( 2^{3n/d} + 2^{(1 - h(\rho))(n-k)} \right).
\]
\end{lemma}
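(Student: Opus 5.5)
The plan is to follow the construction of Dantsin et al.: build the code as a direct product of small optimal codes. Write $N=n-k$ and split the $N$ coordinates into $d$ consecutive blocks, each of length $b=N/d$; this is possible because $d$ divides $N$. If $C_0\subseteq\{0,1\}^{b}$ has covering radius $\rho b$, then the product $C=C_0^{d}\subseteq\{0,1\}^{N}$ has covering radius $d\rho b=\rho N$. The reason is that for any $y\in\{0,1\}^N$ we can pick, block by block, a codeword within distance $\rho b$, and Hamming distance adds over disjoint blocks. The size of the product is $|C|=|C_0|^{d}$. So the whole task is to build, in time about $2^{3N/d}$, a block code $C_0$ of radius $\rho b$ with $|C_0|\le \mathrm{poly}(b)\,2^{(1-h(\rho))b}$. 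Taking the $d$-th power then gives size $\mathrm{poly}(b)^{d}\,2^{(1-h(\rho))N}$. Because $d$ is fixed, $\mathrm{poly}(b)^d$ is a polynomial $q_d(N)$.

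To build $C_0$ on length $b$, I would use the greedy set-cover algorithm on the family of Hamming balls $B_{\rho b}(x)$, $x\in\{0,1\}^b$, covering the universe $\{0,1\}^b$. The steps are as follows.

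\textbf{Fractional bound.} By symmetry, putting weight $1/|B_{\rho b}|$ on every ball is a fractional cover. Its value is $2^b/|B_{\rho b}|$.

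\textbf{Ball volume.} The standard estimate $|B_{\rho b}|\ge \binom{b}{\lfloor\rho b\rfloor}\ge 2^{h(\rho)b}/(b+1)$ holds for $\rho<1/2$. So the fractional value is at most $(b+1)2^{(1-h(\rho))b}$.

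\textbf{Greedy approximation.} Lovász's bound says greedy is within a factor $1+\ln|B_{\rho b}|\le 1+b$ of the fractional optimum. This gives $|C_0|\le (b+1)^2 2^{(1-h(\rho))b}$.

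\textbf{Running time.} A naive greedy implementation touches $2^b$ candidate centres and $2^b$ points, with $\mathrm{poly}(b)$ work per update per round, over at most $2^b$ rounds. That is at most $2^{3b}\mathrm{poly}(b)=2^{3N/d}\mathrm{poly}(N)$ time, which is bounded by the $2^{3n/d}$ term in the statement since $N\le n$.

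Finally, writing out the product code costs time proportional to its size, which is $q_d(N)\,2^{(1-h(\rho))N}$. Adding this to the greedy cost gives the stated construction time $q_d(n-k)(2^{3n/d}+2^{(1-h(\rho))(n-k)})$.

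The step needing the most care is the greedy/Lovász estimate, since it is what makes the block code close to the sphere-covering bound. I would quote the standard set-cover bound $\text{greedy}\le(1+\ln \Delta)\cdot \mathrm{OPT}_{\text{frac}}$, where $\Delta$ is the maximum set size. A second point to check is rounding: $\rho b$ need not be an integer. I would use radius $\lfloor\rho b\rfloor$, so the product radius is at most $\rho N$. The resulting loss in $h$ is at most an $O(\log b)$ factor per block, which is absorbed into $q_d$ because $d$ is constant.
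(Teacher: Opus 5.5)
Your proposal is correct and is essentially the argument of Dantsin et al. The paper imports this lemma from that source without giving a proof, and their argument is the same as yours: greedy set cover by Hamming balls on blocks of length $(n-k)/d$ in time $\mathrm{poly}(n-k)\cdot 2^{3(n-k)/d}$, followed by a $d$-fold direct product, with your symmetric fractional-cover bound standing in for their existence bound on the optimal code. Your closing remark handles the rounding of $\rho b$ correctly: the loss is $O(\log b)$ in the exponent, i.e.\ a $\mathrm{poly}(b)$ factor per block. Your note that $2^{3(n-k)/d}\le 2^{3n/d}$ also reconciles the $n$ versus $n-k$ mismatch in the paper's stated time bound.
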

Thus, covering codes are constructed according to  Lemma \ref{t4}. Let  covering code $C = \{\mathbf{x}_1, \mathbf{x}_2, \dots, \mathbf{x}_m\}\subset\{0,1\}^{n-k}$ with radius $r$ such that:
$$
\forall x \in \{0,1\}^{n-k}, \exists \mathbf{x}_i \in C, d_H(x, \mathbf{x}_i) \le r.
$$
\begin{algorithm}[H]
\caption{KQCPBS algorithm}
\label{QPBS}
\KwIn{ ${K}$-SAT formula $F$($K\ge3$), $r$, $r_{\max}$, assignment $\mathbf{x}_i$, $\epsilon\in(0,1)$.}
\KwOut{ A satisfying assignment of $F$, or \textbf{FALSE}.}
If $\mathbf{x}_i$ satisfies $F$, then output $\mathbf{x}_i$\;
  If $r = 0$, then output \textbf{FALSE}\;
  \If{$r>r_{\max}$} 
   {Select an arbitrary clause 
$C$ of the formula 
$F$ that is unsatisfied by the assignment 
$\mathbf{x}_i$\;
For each literal 
$l$ appearing in the clause 
$C$, construct a new instance by fixing the variable of 
$l$ so that 
$l$ becomes satisfied, resulting in the restricted formula 
$F|_{l=1}$\;
Sort all formulas $F|_{l=1}$ in ascending order based on the number of unsatisfied clauses\;
Select the one with the fewest remaining unsatisfied clauses, recursively invoke Algorithm \ref{QPBS} on $(F|_{l=1},\mathbf{x}_i,r-1, r_{\max}, \epsilon)$. If the algorithm outputs \textbf{FALSE}, then proceed to try other $F|_{l=1}$ in order.} 
 \Else 
 { Call Algorithm \ref{KQPBS} on ($F, \mathbf{x}_i, r_{\max}, \epsilon$)} 
 If any recursively call outputs a $\mathbf{x}^\ast$ such that  $F(\mathbf{x}^\ast) = 1$, the algorithm  outputs the assignment; otherwise, it outputs
 \textbf{FALSE}.
\end{algorithm}

We  present some subalgorithm for clarity, and assemble them into the main algorithm afterward. First, we present Algorithm \ref{KQPBS}, a quantum algorithm for solving  PBS problem.
In line 1 of the Algorithm \ref{KQPBS}, the initial state  consists of three registers:
the first register encodes the sequential choices 
$\mathbf{s}=s_1, s_2, \cdots, s_{\bar{r}}\in\{1,2,\cdots,K\}$ of flipped literals,
the second register $V(\mathbf{s})$ stores the indices of the variables selected by those choices, and the third register holds the value of $F$ under the newly assigned variables.

 Consider a system consisting of $2^k$ quantum computers, each with a qubit count of  $P = cn  (\text{where } c\in (0,1)).$ To solve $K$-SAT problem on this system, the first step is to evaluate the resource requirement: namely, the  number of qubits needed to solve a problem instance of radius $r$ using the Algorithm \ref{KQPBS}. 
  Based on the encoding and computing scheme in  Ref. \cite{dunjko2018computational} and in comparison with the qubit capacity of the available computers, we can determine the maximum problem radius that this system can handle, denoted as $r_{\max}$. For the subsequent development of a distributed hybrid algorithm, we preset the solution radius here as $r_{\max}$.

  A heuristic strategy is proposed in this paper to prioritize the selection of assignment schemes for which satisfying conditions can be more readily found.  
 Throughout the subsequent subalgorithms, this strategy will remain consistently applied.  First, we introduce subalgorithm \ref{QPBS}, a quantum-classical hybrid algorithm designed for solving PBS problem. Subsequently, we present a more efficient quantum-classical hybrid algorithm, namely Algorithm \ref{fball}, which invokes subalgorithm \ref{QPBS} as a subroutine during its execution.

Additionally, for the $K$-ary covering code $\mathbf{C}$ in Algorithm \ref{fball}, we adopt the definition from Ref.\cite{moser2011full}.
\begin{definition}
Let $s, t \in \mathbb{N}$, $$B_s^{(K)}(w) := \{\, w' \in \{1,\ldots,K\}^t \mid d_H(w, w') \le s \,\}.$$ If
$$
  \bigcup_{w \in \mathbf{C}} B_s^{(K)}(w) = \{1,\ldots,K\}^t,
$$a set $\mathbf{C} \subseteq \{1,\ldots,K\}^t$ is called a code of
covering radius $s$. 
In other words, for each $w' \in \{1,\ldots,K\}^t$, there is some
$w \in \mathbf{C}$ such that $d_H(w,w') \le s$.
\end{definition}
The following lemma from Ref.\cite{moser2011full} shows the existence of $K$-ary covering codes.
\begin{lemma}\label{t5}
For any $t, K \in \mathbb{N}$ and $0 \le s \le t$, there exists a
code $\mathbf{C} \subseteq \{1,\ldots,K\}^t$ of covering radius $s$ such that
$$
|\mathbf{C}| \le 
\left\lceil 
\frac{t \ln(K)\, K^t}{\binom{t}{s} (K-1)^s}
\right\rceil.
$$
\end{lemma}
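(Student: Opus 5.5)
This is the $K$-ary covering code existence result of Moser and Scheder, and I would prove it with the standard probabilistic greedy argument. Write $N=K^t$ for the size of the space $\{1,\ldots,K\}^t$. Every ball $B_s^{(K)}(w)$ has the same volume
$$V=\sum_{i=0}^{s}\binom{t}{i}(K-1)^i\ \ge\ \binom{t}{s}(K-1)^s,$$
since a word at distance exactly $i$ from $w$ is determined by choosing the $i$ positions where it differs and then one of the $K-1$ other symbols in each of those positions. Only this lower bound on $V$ will be needed.

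Next I choose $M=\lceil N\ln N/V\rceil$ codewords independently and uniformly at random from $\{1,\ldots,K\}^t$. Fix a target word $w'$. By symmetry of the Hamming metric, $w'$ lies in $B_s^{(K)}(w)$ exactly when $w\in B_s^{(K)}(w')$, so a single random codeword covers $w'$ with probability $V/N$. The $M$ codewords are independent, so
$$\Pr\bigl[w'\text{ uncovered}\bigr]=(1-V/N)^M\le e^{-MV/N}\le e^{-\ln N}=1/N.$$

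A union bound over all $N$ targets then bounds the expected number of uncovered words by at most $1$. To get an honest covering code from this, I would either add the at most one leftover word to the code, or rerun the argument with the sharper estimate $e^{-MV/N}<1/N$, which holds whenever $MV/N>\ln N$. With a strict inequality the expected number of uncovered words is strictly below $1$, so some outcome leaves nothing uncovered, and that outcome is a code of covering radius $s$ with $|\mathbf{C}|\le M$. Finally, since $\ln N=t\ln K$ and $V\ge\binom{t}{s}(K-1)^s$, we get
$$M\le\left\lceil\frac{t\ln(K)\,K^t}{\binom{t}{s}(K-1)^s}\right\rceil,$$
which is the bound in the statement.

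The main obstacle is this boundary case, where the expected number of uncovered words equals exactly $1$ and so gives no guarantee. The clean way around it is the deterministic greedy (set-cover) argument: at each step pick the codeword that covers the most currently uncovered words. Averaging shows the uncovered count shrinks by a factor of at least $(1-V/N)$ per step. After $\lceil N\ln N/V\rceil$ steps the uncovered count is therefore below $N\,e^{-\ln N}=1$. Being an integer, it must be $0$, and the ceiling in the bound is exactly what absorbs the rounding. The degenerate cases $K=1$ and $t=0$, where $\ln N=0$ and a single codeword suffices, need to be handled separately.
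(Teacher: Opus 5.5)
Your argument is correct for $K\ge 2$ and $t\ge 1$, and it is the standard one. The paper gives no proof of this lemma and imports it from Ref.~\cite{moser2011full}, where it is proved by essentially your sampling argument: $M$ independent uniform codewords, the symmetry $w'\in B_s^{(K)}(w)\iff w\in B_s^{(K)}(w')$, and a union bound.

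What you call the main obstacle is not one. Since $1-x<e^{-x}$ for every $x>0$, and here $x=V/N>0$ and $M\ge 1$, you get $(1-V/N)^M<e^{-MV/N}\le e^{-\ln N}=1/N$ strictly. When $V=N$ the left side is simply $0$. So the expected number of uncovered words is strictly below $1$, and the probabilistic proof is already complete.

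Your greedy argument needs exactly this strict inequality to conclude that the uncovered count is \emph{below} $N e^{-\ln N}=1$. With only $\le$ it gives a count $\le 1$, and integrality does not force $0$. So greedy is not a way around a boundary case; what it buys is a deterministic construction. Also drop the fallback of adding the leftover word, since that gives $|\mathbf{C}|\le M+1$, which can exceed the stated bound.

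The degenerate cases cannot be ``handled separately'' as you suggest. When $N=1$, i.e.\ $t=0$, or $K=1$ with $s=0$, the right-hand side is $\lceil 0\rceil=0$. But covering the one-element space $\{1,\ldots,K\}^t$ requires one codeword. For $K=1$ and $s\ge 1$ the expression divides by $(K-1)^s=0$. The lemma as stated is therefore false in these cases and should carry the hypothesis $K\ge 2$, $t\ge 1$. This is harmless here, since the paper only applies it with $K\ge 3$.
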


\begin{algorithm}[H]
\caption{ Quantum-classical hybrid algorithm for ${K}$-PBS problem}
\label{fball}
\KwIn{ ${K}$-SAT formula $F$($K\ge3$),  assignment $\mathbf{x}_i$, $r$,
 code $\mathbf{C}\subset\{1,2,\cdots,K\}^{t}$ of radius $\frac{t}{K}$,  $r_{\max}$, $\epsilon\in(0,1)$.}
\KwOut{ A satisfying assignment of $F$, or \textbf{FALSE}.}
If $\mathbf{x}_i$ satisfies $F$, then output $\mathbf{x}_i$\;
  If $r = 0$, then output \textbf{FALSE}\;
 Construct a maximal set 
$G$ of pairwise disjoint unsatisfied $K$-clauses of  $F$ unsatisfied by $\mathbf{x}_i$\;
 \If{$|G| \le t$}
{
Consider the variable set 
vbl($G$) appearing in the clauses of 
$G$, and enumerate all assignments 
$\phi\in \{0,1\}^{vbl(G)}$\;
Sort all formulas $F|_\phi$ in ascending order based on the number of unsatisfied clauses\;
Select the one with the fewest remaining unsatisfied clauses, and then invoke Algorithm \ref{QPBS} on  $(F|_\phi, \mathbf{x}_i, r, r_{\max}, \epsilon)$. If the algorithm outputs \textbf{FALSE}, then proceed to try other $F|_\phi$ in order.
}
\Else
{Choose a subset 
$H
=\{C_1,\dots,C_t\} \subseteq G$
 of $t$ pairwise disjoint unsatisfied clauses\;
 For each codeword 
$w \in \mathbf{C}$, construct a modified assignment 
$\mathbf{x}_i[H,w]$ according to 
$w$\;
For each such modification, 
record the number of clauses that $\mathbf{x}_i[H,w]$ fails to satisfy under $F_{i'}$\;
Select the one with the fewest remaining unsatisfied clauses, then recursively call Algorithm \ref{fball} on  $(F, \mathbf{x}_i[H,w], r - \frac{t}{K},  r_{\max},\mathbf{C}, \epsilon)$. If the algorithm outputs \textbf{FALSE}, then proceed to try other $\mathbf{x}_i[H,w]$ in order.}
 If any recursive branch finds a $\mathbf{x}^\ast$ such that  $F(\mathbf{x}^\ast) = 1$, the algorithm  outputs this assignment; otherwise, it outputs
 \textbf{FALSE}.      
\end{algorithm}
The set $G$ mentioned in line 3 of Algorithm \ref{fball}, defined as the maximum pairwise disjoint set of clauses, consists of clauses that share no variables with one another. Every remaining unsatisfied clause in $F$ must share at least one variable with some clause in $G$. We provide simplification rules for partial assignments. Substitute $\phi$ into the Boolean formula $F$; if there exists a clause in which all Boolean variables of its literals are 0, discard this branch $F|_\phi$ immediately.   All clauses related to the Boolean variable $\phi$ will participate in the simplification. Satisfied clauses are entirely removed, while unsatisfied clauses are retained, but all parts related to this variable $\phi$—whether positive or negative—are removed. Whenever an partial assignment  occurs, we always prioritize the one with the smallest unsatisfied clause.

\begin{algorithm}[H]
\caption{Distributed quantum-classical hybrid algorithm for ${K}$-SAT problem}
\label{fastball}
\KwIn{ ${K}$-SAT formula $F$($K\ge3$),  covering code $C = \{\mathbf{x}_1, \mathbf{x}_2, \dots, \mathbf{x}_m\}\subset\{0,1\}^{n-k}$ of radius $r$, 
 code $\mathbf{C}\subset\{1,2,\cdots,K\}^{t}$ of radius $\frac{t}{K}$,  $r_{\max}$, $\epsilon\in(0,1)$.}
\KwOut{ A satisfying assignment of $F$, or \textbf{FALSE}.}
  Decompose $F$ into $2^k$ subfunctions $F_{i'}$ based on the $M$ values\; Randomly and non-repetitively select a $F_{i'}$ to proceed to the next step\;
  Check whether all codewords in $C$ satisfy $F_{i'}$. Once an $\mathbf{x}_i$ satisfying $F_{i'}$ appears, the algorithm outputs $\mathbf{x}_i$; otherwise, the number of corresponding unsatisfied clauses is recorded\;
Sort the recorded codewords in ascending order based on the number of unsatisfied clauses\;
 Select the top $2^k$ distinct codewords for parallel processing\;
  \If{$r>r_{\max}$} 
  {Call  Algorithm \ref{fball} on $(F_{i'}, \mathbf{x}_i, {r}, r_{\max}, \mathbf{C},   \epsilon)$}
  \Else 
 {Call Algorithm \ref{KQPBS} on $(F_{i'}, \mathbf{x}_i, \bar{r}, \epsilon)$}
    If any calls finds a $\mathbf{x}^\ast$ such that  $F_{i'}(\mathbf{x}^\ast) = 1$, the algorithm  outputs this assignment; otherwise, it returns to step 5\; 
 If all codewords in step 5 have been attempted, the algorithm will backtrack to step 2 to reselect the $F_{i'}$\; 
The algorithm  outputs \textbf{FALSE} when all $F_{i'}$ is exhausted without finding a satisfying assignment.
\end{algorithm}
Let $s=\frac{t}{K}$.  A fixed ordering is first established for all literals in each clause of $H$. Subsequently, $\mathbf{x}_i[H,w]$ represents the flipping of the $w_j$-th literal in clause $C_j$ of $H$ with respect to assignment $\mathbf{x}_i$, for $1\le j\le t$. Next, an example is given. Assuming $t=3$ and $$H=\{ (\neg x_1 \lor \neg x_2 \lor \neg x_3),  (\neg x_4 \lor \neg x_5 \lor \neg x_6),  ( \neg x_7 \lor \neg x_8 \lor \neg x_9)\}.$$ Let $w=(1,2,3)$; $\mathbf{x}_i$ denotes the all-one assignment. Then, $\mathbf{x}_i[H,w]$ denotes setting the variables $x_1, x_5, x_9$ to $0$, while keeping the assignments of all other variables at $1$. After $H$ is understood, replace $\mathbf{x}_i[H,w]$ with $\mathbf{x}_i[w]$. 

Next, we examine the change in the radius $r$ of the PBS problem after one call to the  code $\mathbf{C}$. Suppose $x^*\in \{0,1\}^{n-k}$ is the promised satisfying assignment; then we must have $d_H(x^*, x_i) \le r$ for any $\forall x_i \in B_r(\mathbf{x}_i)$. Based on the property of the covering code, there  exists a $w^*\in \{1,\ldots,K\}^t $ such that $x_i[w^*]$ moves closer to a satisfying assignment and satisfies $d_H(x_i[w^*], x^*) =d_H(x^*, x_i)-t\le r-t$. Here, $w^*$ does not necessarily belong to the code $\mathbf{C}$, but it must be covered by $\mathbf{C}$. For any $w \in \{1,\ldots,K\}^t$, we have $d_H(w^*, w)\le\frac{t}{K}$ and $d_H(x_i[w^*], x_i[w])=2d_H(w^*, w)$. Due to the triangle inequality property of the Hamming distance, we have
\begin{align*}d_H(x_i[w], x^*)\le &d_H(x_i[w^*], x_i[w])+d_H(x_i[w^*], x^*)\\
\le &r-t+\frac{2t}{K}.\end{align*}
Hence, set $ \Delta := t - \tfrac{2t}{K}$. Let $t$ be a function that grows slowly with $r$. In practical instances, one may set $t = \log \log r$. 

\begin{figure*}[h]
		\centering
		\includegraphics[width=1\linewidth]{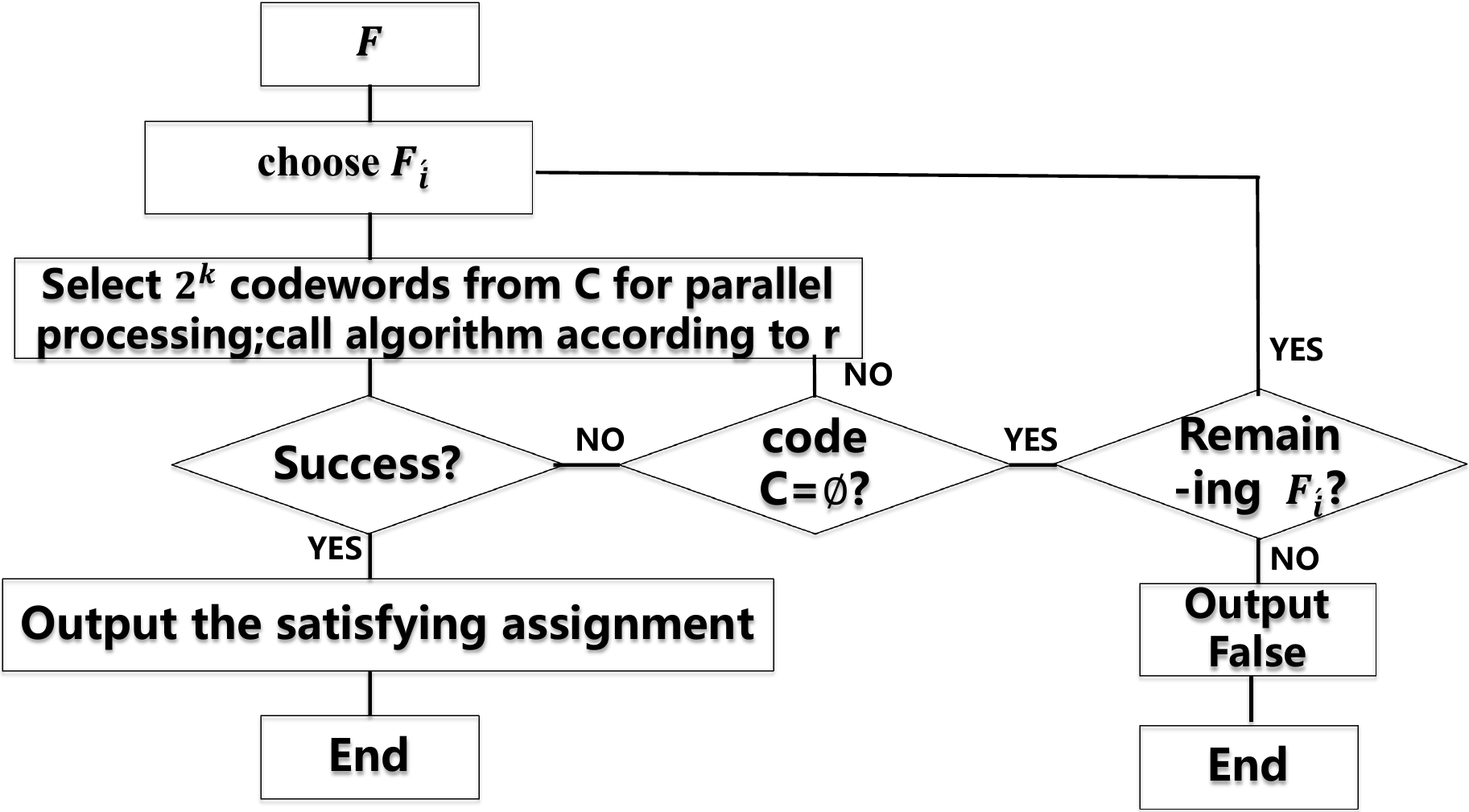}
		\setlength{\abovecaptionskip}{-0.01cm}
		\caption{Framework diagram  of  Algorithm \ref{fastball}.}
		\label{A0}
	\end{figure*}

On this basis, we formally present the distributed quantum-classical hybrid algorithm \ref{fastball} for solving ${K}$-SAT problem. We random select the instance $F_{i'}$. Then, based on the constructed covering code, substitute each codeword into this $F_{i'}$, select the $2^k$ codewords with the smallest number of unsatisfied clauses, and search the Hamming balls corresponding to these $2^k$ codewords to determine whether a satisfying assignment can be obtained. If not, reselect another set of $2^k$ codewords for searching. The process continues until the  set $C$ becomes empty.  The overall  framework of the distributed quantum-classical hybrid algorithm is briefly illustrated in Fig. \ref{A0}.

Algorithm \ref{fastball} can essentially be divided into two processes: an outer loop and an inner loop. The outer loop iterates over different $F_{i'}$, while the inner loop operates over covering code $C$. During Algorithm \ref{fastball}'s invocation of Algorithm \ref{fball}, when the radius $r$ decreases to a point where the quantum device can be invoked, the call radius at that point is denoted as $\bar{r}$, satisfying $r_{\max}\ge\bar{r} \ge r_{\max}- \Delta$.

\subsection{ Correctness analysis of the algorithm}
The initial state in Algorithm \ref{KQPBS} can be prepared, and its concrete implementation can be found in Section A of the supplementary material of Ref. \cite{dunjko2018computational}.
The only difference is that our clauses contain up to 
$K$ literals, so selecting the variable to flip incurs an additional 
$O(K)$ overhead.
However, since 
$K$ is a fixed constant, this overhead is negligible in the asymptotic sense, and therefore we have the following proposition.
\begin{proposition}\label{t1}\rm{[State preparation for the PBS problem oracle]}
There exists a quantum circuit that prepares the  state
$$
    \frac{1}{\sqrt{K^r}}\sum_{\mathbf{s}\in\{1,2,\cdots,K\}^r}
        |\mathbf{s}\rangle\,|V(\mathbf{s})\rangle\,|F_{i'}(\mathbf{x}_{V(\mathbf{s})})\rangle,
$$
using at most
$$
    O\!\left(r\log\!\frac{n-k}{r}+r+\log (n-k)\right)
$$
qubits  and $O(\mathrm{poly}(n-k))$ gates.
\end{proposition}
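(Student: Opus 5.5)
The plan is to build the circuit explicitly as a reversible simulation of the classical non-recursive PBS walk of Ref.~\cite{dantsin2002deterministic}, run in superposition over all flip sequences. We then count qubits register by register. The construction follows Section A of the supplementary material of Ref.~\cite{dunjko2018computational}, with two changes: the ternary choice becomes a $K$-ary choice, and the formula is $F_{i'}$ on $n-k$ variables instead of $F$ on $n$.

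\textbf{Step 1 (choice register).} Prepare $\mathbf{s}$ in uniform superposition over $\{1,\dots,K\}^r$. Each $s_j$ uses $\lceil\log K\rceil=O(1)$ qubits, so this register has $O(r)$ qubits. When $K$ is not a power of two, we use a standard exact uniform-superposition circuit on $\lceil \log K\rceil$ qubits with $\mathrm{poly}$ gates. This is only a relabeling of the amplitude.

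\textbf{Step 2 (path register $V(\mathbf{s})$).} This is the key step for the qubit count. We store $V(\mathbf{s})$ as a subset of at most $r$ indices of $\{1,\dots,n-k\}$, not as an $(n-k)$-bit string. Using a compact (combinatorial or sorted-difference) encoding, this takes $\log\binom{n-k}{\le r}=O(r\log\frac{n-k}{r}+r)$ qubits. We then compute $V$ step by step. At step $j$, the current assignment is $\mathbf{x}_i$ with the variables in the partial $V$ flipped. We find the first clause of $F_{i'}$ that this assignment leaves unsatisfied, take the $s_j$-th literal of that clause, and insert its variable index into $V$ (toggling it if already present).

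Finding the first unsatisfied clause is done by sweeping through the $m=\mathrm{poly}(n-k)$ clauses with a classical control loop. Each clause check needs membership tests of its at most $K$ variables in $V$, plus a flag, and uses $O(\log(n-k))$ workspace qubits: a clause counter, an index register, and flag bits. This workspace is uncomputed after each step, so it is reused.

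\textbf{Step 3 (formula bit).} Evaluate $F_{i'}(\mathbf{x}_{V(\mathbf{s})})$ by the same clause sweep, accumulating into one output qubit. Then uncompute the workspace. The classical description of $F_{i'}$ and $\mathbf{x}_i$ is hard-wired into the gates, so it costs no qubits. Gate count: $r\le n-k$ steps, times $m$ clauses, times $O(K\,\mathrm{poly}\log)$ gates per clause, gives $\mathrm{poly}(n-k)$. Summing the registers gives $O(r)+O(r\log\frac{n-k}{r}+r)+O(\log(n-k))$, as claimed.

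\textbf{Main obstacle.} The hard part is keeping the update of the compact $V$ register reversible without leaking garbage that scales with $r$ or $n-k$. Garbage here means the identity of the chosen clause, or intermediate copies of $V$ kept for each of the $r$ steps. My first approach is the argument of \cite{dunjko2018computational}. The map $(\mathbf{s}_{<j},V_{j-1})\mapsto V_j$ is injective once $s_j$ is kept, since $V_{j-1}$ is recoverable from $V_j$ and $\mathbf{s}$ by recomputation. So each step can be done in place: compute the selected index into scratch, toggle it into $V$, then uncompute the scratch by recomputing the clause search. To uncompute it, we need the clause search to be reproducible from $V_j$; failing that, from $V_{j-1}$ via a Bennett-style compute--copy--uncompute with $O(\log(n-k))$ scratch. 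The factor $K$ enters only through the choice of literal and the clause checks, so it is absorbed in the constants.
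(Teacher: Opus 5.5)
Your architecture is the same as the paper's. The paper also splits the preparation into three parts: the uniform superposition over $\mathbf{s}$ ($r\lceil\log K\rceil=O(r)$ qubits), a unitary $U_1$ that writes $V(\mathbf{s})$, and a unitary $U_2$ that writes $F_{i'}(\mathbf{x}_{V(\mathbf{s})})$; it then adds the resource counts. The difference is that the paper takes $U_1$ and $U_2$ as black boxes from Propositions 2 and 3 of the supplementary material of \cite{dunjko2018computational}, replacing $n$ by $n-k$ and absorbing the $O(K)$ overhead. You instead try to justify the in-place update of the compact $V$ register yourself, and that justification does not hold as written.

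\textbf{Injectivity is not enough.} Injectivity of $(\mathbf{s}_{<j},V_{j-1})\mapsto V_j$ on reachable states is automatic, because $V_{j-1}$ is a function of $\mathbf{s}_{<j}$. It only shows that \emph{some} unitary exists. It does not give one with $O(\log(n-k))$ scratch and polynomially many gates.

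\textbf{$i_j$ is not determined by $(V_j,s_j)$.} Take center $0^{n-k}$ and clauses ordered $(x_a\vee x_b\vee x_e)$, $(x_a\vee\neg x_b\vee x_c)$, $(x_b\vee\neg x_a\vee x_d)$. The sequences $\mathbf{s}=(1,1)$ and $\mathbf{s}=(2,1)$ both end at $V_2=\{a,b\}$ with $s_2=1$. However, $i_2=b$ in the first case and $i_2=a$ in the second. So erasing $i_j$ genuinely needs $s_{<j}$, and your primary route fails.

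\textbf{The fallback is unavailable.} After you toggle $i_j$ into $V$ in place, $V_{j-1}$ no longer exists, and compute--copy--uncompute needs its input intact. \emph{Recomputing $V_{j-1}$ from $\mathbf{s}$} means replaying $j-1$ steps of the very procedure you are building. Done recursively, this either doubles the time per level ($2^r$ overall) or needs a fresh $V$-sized register per level. Generic reversible pebbling costs a factor $\log r$ in space, which already breaks the claimed bound when $r=\Theta(n-k)$.

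So the self-contained part of your argument does not establish the qubit count. As the paper does, you must invoke Proposition 2 of \cite{dunjko2018computational} as a black box; adapting it to $K$-ary choices and $n-k$ variables only changes constants. Alternatively, you would need to supply an update rule that is genuinely reversible with small scratch. A minor point: each membership test scans the compact $V$ register, so a clause check costs more than $\mathrm{poly}\log$ gates, although the total remains polynomial.
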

\begin{proof}
Consider that the initialized quantum state is
$$
    \frac{1}{\sqrt{K^r}}\sum_{\mathbf{s}\in\{1,2,\cdots,K\}^r} 
        |\mathbf{s}\rangle\,|0\rangle\,|0\rangle.
$$
It can be seen that the first register requires at most $r\lceil \log K \rceil$ qubits. $K$ being a positive integer, it follows that $r\lceil \log K \rceil\in O\!\left(r\right)$.
According to Proposition 2 in the supplementary material of Ref. \cite{dunjko2018computational}, there exists a unitary operator 
$U_1$ acting on the first two registers such that 
\begin{align*}
   &U_1 \left(\frac{1}{\sqrt{K^r}}\sum_{\mathbf{s}\in\{1,2,\cdots,K\}^r} 
        |\mathbf{s}\rangle\,|0\rangle\,|0\rangle\right)\\=&\frac{1}{\sqrt{K^r}}\sum_{\mathbf{s}\in\{1,2,\cdots,K\}^r} 
        |\mathbf{s}\rangle\,|V(\mathbf{s})\rangle\,|0\rangle.
\end{align*}
The implementation of this operator requires 
$O\!\left(r\log\!\frac{n-k}{r}+r+\log (n-k)\right)$ qubits and can be realized using 
$O(\mathrm{poly}(n-k))$  quantum gates.
Furthermore, Proposition 3 in the same supplementary material guarantees the existence of a unitary operator 
$U_2$ acting on the last two registers such that
\begin{align*}
  & U_2\left(\frac{1}{\sqrt{K^r}}\sum_{\mathbf{s}\in\{1,2,\cdots,K\}^r} 
        |\mathbf{s}\rangle\,|V(\mathbf{s})\rangle\,|0\rangle\right)\\
        =&\frac{1}{\sqrt{K^r}}\sum_{\mathbf{s}\in\{1,2,\cdots,K\}^r} 
        |\mathbf{s}\rangle\,|V(\mathbf{s})\rangle\,|F_{i'}(\mathbf{x}_{V(\mathbf{s})})\rangle.
\end{align*}
Its implementation likewise uses 
$O\!\left(\log (n-k)\right)$ qubits and can be constructed with 
$O(\mathrm{poly}(n-k))$  quantum gates. By combining the numbers of qubits and quantum gates required by $U_1$ and $U_2$, we obtain the result.
\end{proof}

Given parameters $ n, k, r$, according to Proposition \ref{t1}, suppose the number of qubits is $N = Ar\log\!\frac{n-k}{r}+Br+O(\log (n-k))$, where $A, B > 0$. Once the encoding and computing scheme from $|\mathbf{s}\rangle$ to $|V(\mathbf{s})\rangle$ and then to $|F_{i'}(\mathbf{x}_{V(\mathbf{s})})\rangle$ is determined, the corresponding value of $N$ is also determined.
The current number of qubits in existing quantum computers is $cn$. Let $Ar\log\!\frac{n-k}{r}+Br+O(\log (n-k))=cn$, and solve for $r_{max}$. We may assume that $r_{max}=\gamma(n-k)$. Then we can obtain $A\gamma(n-k)\log\!\frac{1}{\gamma}+B\gamma(n-k)+O(\log (n-k))=cn$. Thus, it follows that $A\gamma\log\!\frac{1}{\gamma}+B\gamma+O(\frac{\log (n-k)}{n-k})=c\frac{n}{n-k}$. When $n$ is sufficiently large, the following holds: $A\gamma\log\!\frac{1}{\gamma}+B\gamma=c$. Therefore, as long as $\gamma$ satisfies $A\gamma\log\!\frac{1}{\gamma}+B\gamma=c$, then $r_{max}\approx\gamma(n-k)$.

Let \begin{equation}\label{e1}
\mathcal{A} \ket{0}=\frac{1}{\sqrt{K^r}}\sum_{\mathbf{s}\in\{1,2,\cdots,K\}^r}
        |\mathbf{s}\rangle\,|V(\mathbf{s})\rangle\,|F_{i'}(\mathbf{x}_{V(\mathbf{s})})\rangle.
\end{equation}
We define the notation
\begin{align*}&\ket{T}:=\frac{1}{\sqrt{K^r}}\sum_{\substack{\mathbf{s}\in\{1,2,\cdots,K\}^r \\ F_{i'}(\mathbf{x}_{V(\mathbf{s})})=1}}
        |\mathbf{s}\rangle\,|V(\mathbf{s})\rangle\,|1\rangle, \\
        &\ket{\bar{T}}:=\frac{1}{\sqrt{K^r}}\sum_{\substack{\mathbf{s}\in\{1,2,\cdots,K\}^r \\ F_{i'}(\mathbf{x}_{V(\mathbf{s})})=0}}
        |\mathbf{s}\rangle\,|V(\mathbf{s})\rangle\,|0\rangle.\end{align*}
\begin{proposition}\label{t2}
Given  oracle $\mathcal{A}$, $\epsilon\in(0,1)$,  
quantum search algorithm can solve PBS problem with a success probability of at least $1-\epsilon^2$, and its time complexity is $O^*(\sqrt{K^r})$.
\end{proposition}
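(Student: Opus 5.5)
The plan is to reduce Proposition \ref{t2} to Theorem \ref{2}: I will run Algorithm \ref{fixed point} with the state-preparation unitary $\mathcal{A}$ of Eq.~(\ref{e1}) and then bound the cost. There are three steps: fix the target set and a lower bound $\lambda_{min}$ on its weight, check that the oracles $U_F(\beta)$ and $U_0(\alpha)$ are efficiently implementable, and count iterations times cost per iteration.

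\emph{Step 1 (target fraction).} Write $\mathcal{A}\ket{0}=\ket{T}+\ket{\bar T}$ as in the notation above. The marked basis states are exactly those whose third register is $\ket{1}$. The weight is
$$\lambda=\|\ket{T}\|^2=\frac{|\{\mathbf{s}\in\{1,\dots,K\}^r: F_{i'}(\mathbf{x}_{V(\mathbf{s})})=1\}|}{K^r}.$$
Under the PBS promise there is a satisfying $x^*$ with $d_H(x^*,\mathbf{x}_i)\le r$. I will appeal to the standard argument of Ref.~\cite{dantsin2002deterministic}, which is also the one used to justify Algorithm \ref{3QPBS}. At each step along the flip path, the current assignment violates some clause $C$. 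Since $x^*$ satisfies $C$, at least one of its at most $K$ literals has its variable on which the current assignment and $x^*$ disagree. Flipping that literal reduces the distance to $x^*$ by one. So some choice sequence $\mathbf{s}^*$ reaches $x^*$ within $r$ steps. After that point the path stops, or equivalently the remaining choices are irrelevant: $V(\mathbf{s})$ is defined to halt once the assignment satisfies $F_{i'}$. Hence at least one $\mathbf{s}$ is marked and $\lambda\ge\lambda_{min}:=K^{-r}$.

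\emph{Step 2 (oracles).} $U_F(\beta)$ is a phase $e^{\imath\beta}$ conditioned on the single qubit holding $F_{i'}(\mathbf{x}_{V(\mathbf{s})})$, i.e.\ a single-qubit phase gate. $U_0(\alpha)$ is a multi-controlled phase on the all-zero state of $O(r\log\frac{n-k}{r}+r+\log(n-k))$ qubits, which costs polynomially many gates. By Proposition \ref{t1}, $\mathcal{A}$ and $\mathcal{A}^\dagger$ each use $O(\mathrm{poly}(n-k))$ gates. So each $G(\alpha_j,\beta_j)$ costs $\mathrm{poly}(n)$.

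\emph{Step 3 (count).} Choose $L$ as the nearest odd integer with $L\ge \log(2/\epsilon)\sqrt{K^r}$, and set the angles by Eq.~(\ref{t3}). Theorem \ref{2} then gives success probability at least $1-\epsilon^2$ for measuring a marked state, from which $V(\mathbf{s})$, and hence $\mathbf{x}_{V(\mathbf{s})}$, is read off. The classical check in line 4 of Algorithm \ref{KQPBS} is polynomial and never accepts a wrong answer. The total time is $l\cdot\mathrm{poly}(n)=O(\log(1/\epsilon)\sqrt{K^r}\,\mathrm{poly}(n))=O^*(\sqrt{K^r})$ for fixed $\epsilon$.

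The main obstacle is Step 1. The bound $\lambda\ge K^{-r}$ requires that $V(\mathbf{s})$ is well defined for every $\mathbf{s}$. This means fixing a deterministic rule for choosing the unsatisfied clause, padding clauses with fewer than $K$ literals, and halting once a satisfying assignment is reached. I will handle this by borrowing the encoding of Ref.~\cite{dunjko2018computational}, extended to $K$ literals as in the proof of Proposition \ref{t1}. Any deterministic rule preserves the existence argument.
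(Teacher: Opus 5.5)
Your proposal is correct and follows essentially the same route as the paper: both apply the fixed-point search to $\mathcal{A}\ket{0}=\ket{T}+\ket{\bar T}$ with $\lambda_{min}=K^{-r}$, $L$ the nearest odd integer to $\log(2/\epsilon)\sqrt{K^r}$ and the Yoder et al.\ angles, and then invoke Proposition~\ref{t1} to charge only polynomial cost per use of $\mathcal{A}$, giving $O^*(\sqrt{K^r})$. The only difference is that you explicitly justify $\lambda\ge K^{-r}$ via the Dantsin et al.\ flip-path argument under the PBS promise (together with the encoding conventions it requires), a step the paper takes for granted when it sets $\lambda_{min}=1/K^r$.
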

\begin{proof}
 
 Eq.(\ref{e1}) can be reformulated as
\begin{equation}\label{e2}
\frac{1}{\sqrt{K^r}}\sum_{\mathbf{s}\in\{1,2,\cdots,K\}^r}
        |\mathbf{s}\rangle\,|V(\mathbf{s})\rangle\,|F_{i'}(\mathbf{x}_{V(\mathbf{s})})\rangle
        =\ket{T}+\ket{\bar{T}}.
\end{equation}
Given $\epsilon\in(0,1)$. Set $\lambda_{min}=\frac{1}{{K^r}}$, $L\ge {\log(2/\epsilon)}{\sqrt{{K^r}}},$ where $L$ takes the nearest odd number to ${\log(2/\epsilon)}{\sqrt{{K^r}}}$.
The $G$ operator is defined as follows:
 \begin{equation}
G(\alpha,\beta) = - \mathcal{A}U_{0}(\alpha)\mathcal{A}^{\dagger} U_{1}(\beta),
\end{equation}
where  \begin{align*}&U_{1} = \mathcal{I}_{N}-(1-e^{\imath\beta
})\left(\mathcal{I}_{N-1} \otimes\ket{1}\bra{1} \right),\\
&U_0 = \mathbb{I}_{N} - (1-e^{\imath\alpha
}) \ket{0}_{N}\bra{0}_{N}, \end{align*} where $\alpha$ and $\beta$ can be calculated from Eq.(\ref{t3}).  Under this setup, the fixed-point quantum search algorithm finds the target element with probability 
at least $1-\epsilon^2$. The query complexity of the entire quantum algorithm is  $L-1\in O(\sqrt{K^r})$, meaning that the  algorithm requires $L-1$ calls to the  operator $\mathcal{A}$. Combining with Proposition \ref{t1}, we know that the preparation of the initial state, i.e., Eq.(\ref{e1}), only requires polynomial time. Therefore, the overall algorithm runs in time $O^*(\sqrt{K^r})$.
\end{proof}

As can be seen from  Algorithm  \ref{fball} and \ref{fastball}, the quantum computational radius corresponding to the algorithm that can enter the solving phase is different in the two cases of $G \le t$ and $G > t$.
\begin{proposition}\label{t3}
 If $G \le t$, 
then $F_{i'}|_\phi$ is  a Boolean formula with at most $K-1$ literals.
\end{proposition}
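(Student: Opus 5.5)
The argument rests on the maximality of the set $G$ built in line 3 of Algorithm \ref{fball}. In the case $|G|\le t$ the algorithm enumerates every partial assignment $\phi\in\{0,1\}^{vbl(G)}$ and passes to $F_{i'}|_\phi$. It therefore suffices to fix an arbitrary such $\phi$ and show that every clause surviving in $F_{i'}|_\phi$ has at most $K-1$ literals.

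First I will record the structural fact. By construction, $G$ is a maximal family of pairwise variable-disjoint $K$-clauses of $F_{i'}$ that are unsatisfied by $\mathbf{x}_i$. As noted after Algorithm \ref{fball}, every remaining $K$-clause of $F_{i'}$ must share at least one variable with some clause of $G$. Otherwise it could be added to $G$, contradicting maximality. Here I must be careful about the class of clauses over which $G$ is maximal. If $G$ is maximal only among clauses unsatisfied by $\mathbf{x}_i$, then a $K$-clause already satisfied by $\mathbf{x}_i$ could avoid $vbl(G)$ entirely. I expect the intended reading to be one of two things: either the relevant clauses are those still "active" in the PBS search, or satisfied clauses are discarded under the simplification rules. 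I would state this convention explicitly, and if needed argue with $G$ maximal among all $K$-clauses, which is also how the Moser--Scheder scheme \cite{moser2011full} defines it.

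Next I apply the simplification rules stated after Algorithm \ref{fball}. Let $C$ be any clause of $F_{i'}$ with exactly $K$ literals, and let $x_j$ be a variable shared between $C$ and $vbl(G)$.
\begin{itemize}
\item If $\phi$ makes the literal of $x_j$ in $C$ true, then $C$ is satisfied and removed from $F_{i'}|_\phi$.
\item Otherwise that literal is false under $\phi$ and is deleted, so $C|_\phi$ has at most $K-1$ literals.
\item If every literal of $C$ is falsified, the branch is discarded, and the claim holds vacuously for that branch.
\end{itemize}
Clauses of $F_{i'}$ that originally had fewer than $K$ literals can only shrink under restriction. Hence every clause of $F_{i'}|_\phi$ has at most $K-1$ literals, so $F_{i'}|_\phi$ is a $(K-1)$-SAT formula.

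The main obstacle is the maximality step described above. I would resolve it by fixing the definition of $G$ as maximal among all $K$-clauses of the current formula. Under that definition the shared-variable property holds for every $K$-clause, and the remaining steps are routine.
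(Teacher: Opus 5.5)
Your core step is the paper's own argument. Maximality of $G$ forces every clause falsified by $\mathbf{x}_i$ to meet $vbl(G)$, and your three cases then show that such a clause is removed, shortened, or kills the branch. You are also right that this says nothing about $K$-clauses satisfied by $\mathbf{x}_i$, and that the paper's last sentence passes silently from falsified clauses to all of $F_{i'}|_\phi$. But this is not a matter of convention. With $G$ as defined in line 3 of Algorithm~\ref{fball}, the statement read literally is false. Take $K=3$, $F_{i'}=(y_1\lor y_2\lor y_3)\land(y_4\lor y_5\lor y_6)$ and $\mathbf{x}_i=(0,0,0,1,1,1)$. The only falsified clause is the first, so $G=\{(y_1\lor y_2\lor y_3)\}$ and $|G|=1\le t$. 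Then $\phi=(1,0,0)$ gives $F_{i'}|_\phi=(y_4\lor y_5\lor y_6)$, which still has $K$ literals.

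Your repair, taking $G$ maximal among all $K$-clauses, proves the literal claim only for a different algorithm, and that algorithm no longer works. The $|G|>t$ branch picks $H\subseteq G$ and relies on $d_H(\mathbf{x}_i[w^*],x^*)\le r-t$. That bound needs every $C_j\in H$ to be falsified by $\mathbf{x}_i$, so that $x^*$ disagrees with $\mathbf{x}_i$ on some variable of $C_j$. A $G$ containing satisfied clauses breaks this step. That is also why the scheme of \cite{moser2011full}, which line 3 follows, takes $G$ among clauses unsatisfied by the current assignment, contrary to your attribution.

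The right fix is to weaken the conclusion, not change the hypothesis: every clause of $F_{i'}|_\phi$ that is falsified by $\mathbf{x}_i$ has at most $K-1$ literals. Your bullets already prove exactly this. What is missing is why it suffices:
\begin{itemize}
\item Algorithm~\ref{QPBS} keeps the center $\mathbf{x}_i$ fixed and only restricts the formula.
\item An intact $K$-clause of $F_{i'}|_\phi$ is satisfied by $\mathbf{x}_i$, by maximality of $G$.
\item Such a clause stays satisfied until one of its variables is fixed, after which it has at most $K-1$ literals.
\end{itemize}
Hence every clause the recursion ever branches on has at most $K-1$ literals, which is the branching factor $K-1$ the runtime analysis uses.
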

\begin{proof}
We mainly explains why the maximum number of clause variables is $K-1$ rather than $K$. 
For any clause $C$ that is unsatisfied under assignment $\mathbf{x}_i$,  
either $C\in G$,  
or $C$ shares at least one variable with some $C_j\in G$.  
Otherwise, if there exists such a $C$ that is unsatisfied under $\mathbf{x}_i$ and is disjoint from all $C_j\in G$, we could simply add it to $G$ to form a larger disjoint set $G\cup\{D\}$,  
which would contradict the assumption that $G$ is a maximal disjoint set.
Therefore, for any clause $C\notin G$ that is unsatisfied under assignment $\mathbf{x}_i$, due to the influence of the values in $\phi$, its number of variables is reduced by at least one. That is, in the worst case, it becomes an unsatisfied clause containing $K-1$ variables. Furthermore, $F_{i'}|_\phi$ is  a Boolean formula with at most $K-1$ literals.
\end{proof}

\begin{theorem}\label{2}
Consider $2^k$ quantum computers with $P=cn$ qubits, where $c\in (0,1)$.  Given  oracle $\mathcal{A}$, $\epsilon\in(0,1)$,  and the binary covering codes and $K$-ary covering codes satisfy  Lemma \ref{t4} and Lemma \ref{t5}, respectively, then Algorithm \ref{fastball} can solve ${K}$-SAT problem with a success probability of at least $1-\epsilon^2$, using at most
$$
    O\!\left(r\log\!\frac{n-k}{r}+r+\log (n-k)\right)
$$
qubits and with a runtime of $O^*\left(2^{\,(n-k)\!\left(
1 +\log\left(\frac{K-1}{ K}\right)-\gamma\log\left(\frac{K-1}{\sqrt K}\right)+\epsilon^{\prime}
   \right)}\right),$ where $\epsilon^{\prime}$ can be made arbitrarily small, $\gamma$ is a constant related to $c$.
\end{theorem}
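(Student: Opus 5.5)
The proof has three parts: correctness (the success probability), the qubit count, and the running time. The running time is computed by analyzing one Hamming ball of radius $r$ and then multiplying by the number of balls and the number of subfunctions.

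\emph{Correctness and qubits.} Fix a satisfying assignment $x^\ast$ of $F$. It belongs to exactly one subfunction $F_{i'}$, namely the one indexed by its first $k$ coordinates, and its remaining $n-k$ coordinates lie in some ball $B_r(\mathbf{x}_i)$ with $\mathbf{x}_i\in C$, by Lemma \ref{t4}. The heuristic sorting steps in Algorithms \ref{fastball}, \ref{fball} and \ref{QPBS} only change the order in which branches are explored. Every branch is eventually tried, so the classical recursion is still exhaustive. I will show by induction on the recursion depth that some branch reaches a call of Algorithm \ref{KQPBS} whose promise holds.
\begin{itemize}
\item In the branch of Algorithm \ref{fball} with $|G|>t$, the triangle-inequality computation above shows that some codeword $w\in\mathbf{C}$ gives $d_H(\mathbf{x}_i[w],x^\ast)\le r-\Delta$.
\item In the branch with $|G|\le t$, enumerating $\phi$ includes the restriction of $x^\ast$ to vbl$(G)$, and Proposition \ref{t3} bounds the clause width of $F_{i'}|_\phi$.
\item In Algorithm \ref{QPBS}, one literal of each unsatisfied clause agrees with $x^\ast$, so the distance drops by one.
\end{itemize}
At the leaf, $\bar r\le r_{\max}$ and the target fraction is at least $K^{-\bar r}$. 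Proposition \ref{t2} (via Theorem \ref{2} on fixed-point search) then succeeds with probability at least $1-\epsilon^2$, and the final classical check removes false positives. The qubit bound follows directly from Proposition \ref{t1} with $r=\bar r\le r_{\max}$. By the choice of $\gamma$, this fits in $cn$ qubits on each of the $2^k$ machines. No quantum communication is needed, because the machines handle disjoint codewords.

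\emph{Runtime.} Let $T(r)$ be the cost for one ball of radius $r$. Proposition \ref{t2} gives the quantum leaf cost $O^\ast(K^{\bar r/2})$. For the classical part, I follow the Moser--Scheder analysis. In the $|G|>t$ branch, Lemma \ref{t5} with $s=t/K$ gives
\[
|\mathbf{C}|\le \mathrm{poly}(t)\,K^t\Big/\Big(\binom{t}{t/K}(K-1)^{t/K}\Big)=\big((K-1)^{1+o(1)}\big)^{t}\cdot 2^{o(t)}.
\]
The radius falls by $\Delta=t(K-2)/K$ per step, so the amortized branching factor per unit of radius is $(K-1)^{1+o(1)}$ as $t=\log\log r\to\infty$. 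In the $|G|\le t$ branch there are only $2^{Kt}=2^{o(r)}$ choices of $\phi$. Each leads, by Proposition \ref{t3}, to a $(K-1)$-width instance, which Algorithm \ref{QPBS} handles with branching $K-1$ per unit of radius. Both branches therefore give
\[
T(r)\le (K-1)^{(r-\bar r)(1+o(1))}\,K^{\bar r/2}.
\]
With $\bar r\approx\gamma(n-k)$, this equals $(K-1)^{r}\big(\sqrt K/(K-1)\big)^{\gamma(n-k)}2^{o(n)}$.

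Next I multiply by the number of codewords, $2^{(1-h(\rho))(n-k)}$, with $r=\rho(n-k)$, and optimize over $\rho$. The expression $2^{(1-h(\rho))(n-k)}(K-1)^{\rho(n-k)}$ is minimized at $\rho=1/K$, where it equals $2^{(n-k)(1+\log\frac{K-1}{K})}$. This is the standard Dantsin et al.\ computation: $h(1/K)=\log K-\frac{K-1}{K}\log(K-1)$. Multiplying in the quantum saving $2^{-\gamma(n-k)\log\frac{K-1}{\sqrt K}}$ gives the stated exponent. The cost of the $2^k$ subfunctions is absorbed by the $2^k$ parallel machines, the covering-code construction time is lower order for large $d$, and all $o(\cdot)$ terms are absorbed into $\epsilon'$.

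\emph{Main obstacle.} The hard step is making the amortized recursion rigorous. The $|G|>t$ branch reduces the radius by $\Delta$ rather than $t$, and the $t/K$ slack must be shown to cost only $2^{o(r)}$. The transition into the $(K-1)$-width regime must also not lose the exponent. My first attempt is an induction on $r$ with the hypothesis $T(r)\le \mathrm{poly}\cdot\lambda^{r-\bar r}K^{\bar r/2}$ for $\lambda=(K-1)^{1+\delta}$, checking both branches against Lemma \ref{t5} and letting $\delta\to0$ as $t$ grows.
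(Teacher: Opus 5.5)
Your proposal follows the paper's proof essentially step for step. The common steps are:
\begin{itemize}
\item a per-ball bound split into the $|G|>t$ case (the $K$-ary code of Lemma~\ref{t5}, with radius drop $\Delta$) and the $|G|\le t$ case (a one-time $2^{Kt}$ overhead, then $(K-1)$-way branching);
\item the quantum leaf cost $K^{\bar r/2}$ from Proposition~\ref{t2};
\item multiplication by the $2^{(1-h(\rho))(n-k)}$ binary codewords, with $\rho=1/K$;
\item the $2^k$ machines cancelling the $2^k$ subfunctions.
\end{itemize}
Your explicit correctness induction and your induction on $r$ for the mixed regime are more careful versions of what the paper does with its three scenarios.

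One slip must be fixed: the displayed bound on $|\mathbf{C}|$ needs exponent $\Delta$, not $t$. From $\binom{t}{t/K}\ge 2^{h(1/K)t}/(t+1)$ and $2^{h(1/K)t}=K^t(K-1)^{-t(K-1)/K}$, Lemma~\ref{t5} gives $|\mathbf{C}|\le \mathrm{poly}(t)\,(K-1)^{t(K-2)/K}=\mathrm{poly}(t)\,(K-1)^{\Delta}$. This is the $t^2(K-1)^{t-2t/K}$ bound the paper quotes from Moser--Scheder. With $(K-1)^{t(1+o(1))}$ as you wrote it, the amortized cost per unit of radius would be $(K-1)^{K/(K-2)}$, which is $8$ rather than $2$ for $K=3$. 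So your stated rate $(K-1)^{1+o(1)}$ follows only from the corrected bound.
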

\begin{proof}
The number of qubits follows directly from Proposition \ref{t1}. Consequently, we shall concentrate our analysis on the running time of the algorithm. Let's first examine the time required to solve a single PBS problem. If entering into $G \le t$, for at most $t$  $K$-clauses in $G$, it can generate $2^{tK}$ branches.  Then, according to the strategy, branches are selected to continue recursion until the call radius $r_{max}$ is satisfied, and the quantum fixed point search algorithm is employed to accelerate the solving process. Combining Proposition \ref{t3}, the total time can be expressed as $O^*(2^{tK}(K-1)^{r-r_{max}}({K-1})^{\frac{r_{max}}{2}})$. Since $K$ and $t$ are constants, when $G \le t$, the total solving time of Algorithm \ref{fball} on the single PBS problem is $O^*\left((K-1)^{r-\frac{r_{max}}{2}}\right)$.

Next, let's examine the second case where $G > t$. According to Lemma \ref{t5}, the size of a $K$-ary code satisfies $|\mathbf{C}| \le 
\left\lceil 
\frac{t \ln(K)\, K^t}{\binom{t}{r} (K-1)^r}
\right\rceil$.  Moreover, Proposition 10 in Ref.\cite{moser2011full} has proven that $|\mathbf{C}|\le t^{2} (K - 1)^{t - 2t/K}$. Since $ \Delta = t - \tfrac{2t}{K}$, the total number of branches is $|\mathbf{C}|\le t^{2} (K - 1)^{\Delta}=(t^{\frac{2}{\Delta}} (K - 1))^{\Delta}$. Since $\lim_{t \to \infty}t^{\frac{2}{\Delta}}=1$, for  $\varepsilon> 0$ and $t$ sufficiently large, we have $t^{\frac{2}{\Delta}} \le 1 + \frac{\epsilon}{K - 1}$, and thus $|\mathbf{C}|\le(K - 1+\epsilon)^{\Delta}$. Without loss of generality, let the radius for invoking the quantum fixed-point search in this case be $\bar{r}$, which satisfies $r_{\max}\ge\bar{r} \ge r_{\max}- \Delta$. Thus, the total time can be expressed as $O^*((K - 1+\epsilon)^{r-\bar{r}}K^{\frac{\bar{r}}{2}})$.

The first two scenarios are both single-state cases: either $G \le t$ holds consistently until the radius decreases to the threshold where quantum computation becomes applicable, or $G > t$ holds consistently throughout. Additionally, there exists a third scenario: initially, the state enters $G > t$, later transitions into $G \le t$), and then the radius decreases to a range that can be processed by a quantum computer. Let the radius at which the transition from $G > t$ to $G \le t$ occurs be $ r^{\prime}$. Then the  time can be expressed as  \begin{align*}
&(K - 1+\epsilon)^{r-r^{\prime}}2^{tK}(K-1)^{r^{\prime}-r_{max}}({K-1})^{\frac{r_{max}}{2}}\\
<& (K - 1+\epsilon)^{r-r_{max}}2^{tK}K^{\frac{r_{max}}{2}}. \end{align*}Thus, the total time is $O^*((K - 1+\epsilon)^{r-\bar{r}}K^{\frac{\bar{r}}{2}})$.

Combining Lemma \ref{t4} and letting $r=\rho (n-k)$, the total time for solving the ${K}$-SAT problem is \begin{equation}\label{t9}
q_d(n-k) \cdot 2^{(1 - h(\rho))(n-k)}O^*((K - 1+\epsilon)^{\rho (n-k)-r_{max}}K^{\frac{r_{max}}{2}}).\end{equation}This time does not account for the construction of the covering code,  because the covering code is first constructed offline before executing the algorithm.

Substituting $r_{max}=\gamma(n-k)$ into Eq.(\ref{t9}) yields
\begin{equation}\label{t10}
O^*\left(2^{\,(n-k)\!\left(
1 -h(\rho)
   \right)}(K - 1+\epsilon)^{\rho({n-k})}
   {\left(\frac{\sqrt K}{K - 1+\epsilon}\right)}^{\gamma(n-k)}\right).
\end{equation}
Since $\frac{\sqrt K}{K - 1+\epsilon}<\frac{\sqrt K}{K - 1}$, it follows that 
\begin{equation}\label{t11}
\begin{aligned}
 {\left(\frac{\sqrt K}{K - 1+\epsilon}\right)}^{\gamma(n-k)}
 < &{\left(\frac{\sqrt K}{K - 1}\right)}^{\gamma(n-k)}\\
 =&2^{\gamma(n-k)\log\left(\frac{\sqrt K}{K - 1}\right)}.
  \end{aligned}
\end{equation}
Additionally, we have
\begin{equation}\label{t12}
 (K - 1+\epsilon)^{\rho({n-k})}=2^{\rho({n-k})\log(K - 1+\epsilon)}.
\end{equation}
Therefore, Eq.(\ref{t10})  can be expressed as \begin{align}\label{t13}&O^*\left(2^{\,(n-k)\!\left(
1 -h(\rho)
   \right)}(K - 1+\epsilon)^{\rho({n-k})}
   {\left(\frac{\sqrt K}{K - 1}\right)}^{\gamma(n-k)}\right)\nonumber\\
 =  &O^*\left(2^{\,(n-k)\!\left(
1 -h(\rho)-\gamma\log\left(\frac{K-1}{\sqrt K}\right)+\rho\log(K - 1+\epsilon)
   \right)}\right). 
   \end{align}
   To minimize $1 -h(\rho)-\gamma\log\left(\frac{K-1}{\sqrt K}\right)+\rho\log(K - 1+\epsilon)$ for $0 < \rho < 1/2$,  let $f(\rho)=1 -h(\rho)+C\rho$, $C=\log(K - 1+\epsilon)$ is a constant greater than 1. By taking the derivative, we can find that the minimum of $f(\rho)$ is attained at $\rho=\frac{1}{2^C+1}$.
  Therefore, setting $\rho= \frac{1}{K}$ minimizes $1 -h(\rho)-\gamma\log\left(\frac{K-1}{\sqrt K}\right)+\rho\log(K - 1+\epsilon)$. 
 We have \begin{align}&1 -h(\frac{1}{K})+\frac{\log(K - 1+\epsilon)}{K}-\gamma\log\left(\frac{K-1}{\sqrt K}\right)\\=&1+\log\left(\frac{K-1}{ K}\right)+\epsilon^{\prime}-\gamma\log\left(\frac{K-1}{\sqrt K}\right), \end{align}
 where $\epsilon^{\prime}=\frac{1}{K}\log(\frac{K - 1+\epsilon}{K-1})$. 
 Ultimately, the time complexity obtained is  $$O^*\left(2^{\,(n-k)\!\left(
1 +\log\left(\frac{K-1}{ K}\right)-\gamma\log\left(\frac{K-1}{\sqrt K}\right)+\epsilon^{\prime}
   \right)}\right).$$
   
  Since each synchronization involves $2^k$ computers, the total time can be expressed as $$O^*\left(2^{\,(n-k)\!\left(
1 +\log\left(\frac{K-1}{ K}\right)-\gamma\log\left(\frac{K-1}{\sqrt K}\right)+\epsilon^{\prime}
   \right)-k}\right).$$Combining the initial $2^k$ $F_{i'}$, the running time for solving ${K}$-SAT problem is $$O^*\left(2^{\,(n-k)\!\left(
1 +\log\left(\frac{K-1}{ K}\right)-\gamma\log\left(\frac{K-1}{\sqrt K}\right)+\epsilon^{\prime}
   \right)}\right).$$
\end{proof}
\subsection{ Comparison to related works}
Compared to the algorithm in Ref.\cite{moser2011full}, this algorithm contributes a constant-factor speedup of $\gamma\log\left(\frac{K-1}{\sqrt K}\right)$ in the exponent—that is, it reduces the complexity from $O(2^n)$ to $O(2^{(1-c)(n-k)})$. When the parameters $K = 3$, $k=0$, the complexity of our algorithm reduces to the case described in Ref.\cite{dunjko2018computational}. When  the parameters $K = 3$, $k\ge1$, compared to the algorithm in Ref.\cite{dunjko2018computational}, our algorithm reduces the number of qubits required by at least $k$. Compared with the distributed quantum algorithm proposed in Ref.\cite{lin2024parallel}, our algorithm completely avoids quantum communication overhead during the distributed computation process, while the compared algorithm relies heavily on quantum communication to achieve  computation.

\section{Conclusion}\label{4}
  Considering the constraints on circuit depth and qubit count, we have proposed a distributed quantum-classical hybrid algorithm for solving the ${K}$-SAT problem, which leverages multiple small-scale quantum computers to accelerate the solving process of classical algorithms. Each time, we select the subproblem that is most likely to contain a solution to search for the target, while making full use of the problem structure by employing quantum computers to tackle classically challenging subprocesses. This approach reduces the dominant exponential term in the runtime by a constant factor that scales linearly with the problem size. The advantage of our proposed algorithm lies in the fact that it not only eliminates the need for quantum communication but also requires only a small-scale quantum computer to run.
  
The Constraint Satisfaction Problem (CSP) is a generalization of the SAT problem, with its core feature being the allowance of multi-valued assignments. Given this generality, the algorithm proposed in this paper can also be applied to constraint satisfaction problems and other, more general problem domains.  In the future, it would be possible to consider implementing the entire algorithm on NISQ quantum devices by constructing well-behaved covering codes, using more efficient encoding schemes, and further reducing the number of qubits required.

\section*{Declaration of competing interest}
The authors declare that they have no known competing financial interests or personal relationships that can have appeared to influence the work reported in this paper.


\bibliographystyle{elsarticle-num} 
 \bibliography{chapter2}

\end{document}